\newtheorem{definition}{\textbf{Definition}}
\newtheorem{proposition}{\textbf{Proposition}}
\newtheorem*{proof}{\textbf{Proof}}
\begin{document}
%
% paper title
% Titles are generally capitalized except for words such as a, an, and, as,
% at, but, by, for, in, nor, of, on, or, the, to and up, which are usually
% not capitalized unless they are the first or last word of the title.
% Linebreaks \\ can be used within to get better formatting as desired.
% Do not put math or special symbols in the title.

\title{Virtual Network Function Placement in Satellite Edge Computing with a Potential Game Approach}
%
%
% author names and IEEE memberships
% note positions of commas and nonbreaking spaces ( ~ ) LaTeX will not break
% a structure at a ~ so this keeps an author's name from being broken across
% two lines.
% use \thanks{} to gain access to the first footnote area
% a separate \thanks must be used for each paragraph as LaTeX2e's \thanks
% was not built to handle multiple paragraphs
%

\author{Xiangqiang~Gao,
        Rongke~Liu,~\IEEEmembership{Senior~Member,~IEEE,}
        and~Aryan~Kaushik,~\IEEEmembership{Member,~IEEE}% <-this % stops a space
\thanks{X.~Gao and R.~Liu are with the School of Electronic and Information Engineering, Beihang University, Beijing 100191, China e-mail: (\{xggao, rongke\_liu\}@buaa.edu.cn).}% <-this % stops a space
\thanks{A.~Kaushik is with the Department of Electronic and Electrical Engineering, University College London (UCL), London WC1E 7JE, United Kingdom e-mail: (a.kaushik@ucl.ac.uk).}}

\maketitle

% As a general rule, do not put math, special symbols or citations
% in the abstract or keywords.
\begin{abstract}
  Satellite networks, as a supplement to terrestrial networks, can provide effective computing services for Internet of Things (IoT) users in remote areas. Due to the resource limitation of satellites, such as in computing, storage, and energy, a computation task from a IoT user can be divided into several parts and cooperatively accomplished by multiple satellites to improve the overall operational efficiency of satellite networks. Network function virtualization (NFV) is viewed as a new paradigm in allocating network resources on-demand. Satellite edge computing combined with the NFV technology is becoming an emerging topic. In this paper, we propose a potential game approach for virtual network function (VNF) placement in satellite edge computing. The VNF placement problem aims to maximize the number of allocated IoT users, while minimizing the overall deployment cost. We formulate the VNF placement problem with maximum network payoff as a potential game and analyze the problem by a game-theoretical approach. We implement a decentralized resource allocation algorithm based on a potential game (PGRA) to tackle the VNF placement problem by finding a Nash equilibrium. Finally, we conduct the experiments to evaluate the performance of the proposed PGRA algorithm. The simulation results show that the proposed PGRA algorithm can effectively address the VNF placement problem in satellite edge computing.
\end{abstract}

% Note that keywords are not normally used for peerreview papers.
\begin{IEEEkeywords}
Network function virtualization (NFV), satellite edge computing, virtual network function (VNF), resource allocation, potential game.
\end{IEEEkeywords}

% For peer review papers, you can put extra information on the cover
% page as needed:
% \ifCLASSOPTIONpeerreview
% \begin{center} \bfseries EDICS Category: 3-BBND \end{center}
% \fi
%
% For peerreview papers, this IEEEtran command inserts a page break and
% creates the second title. It will be ignored for other modes.
\IEEEpeerreviewmaketitle

\section{Introduction}
% The very first letter is a 2 line initial drop letter followed
% by the rest of the first word in caps.
%
% form to use if the first word consists of a single letter:
% \IEEEPARstart{A}{demo} file is ....
%
% form to use if you need the single drop letter followed by
% normal text (unknown if ever used by the IEEE):
% \IEEEPARstart{A}{}demo file is ....
%
% Some journals put the first two words in caps:
% \IEEEPARstart{T}{his demo} file is ....
%
% Here we have the typical use of a "T" for an initial drop letter
% and "HIS" in caps to complete the first word.
\IEEEPARstart{W}{ith} the rapid development of the Internet of Things (IoT) and edge computing technologies, IoT users can be distributed in order to provide wide coverage services in remote areas, e.g., environment monitoring, ocean transportation, smart grid, etc., \cite{7289337}. Considering that IoT users have low latency requirements, limited computing capabilities and battery power, computation tasks from IoT users can be offloaded to nearby edge servers for further performing, where edge servers are usually deployed at base stations (BSs) \cite{8030322}. However, terrestrial networks have not been established in some remote areas of deserts, oceans, and mountains, due to high network construction costs and specific geographical conditions \cite{9048610}. Therefore, it is hard to offer data collection and computation offloading for IoT users only by terrestrial networks in these remote areas. As a supplement to terrestrial networks, low earth orbit (LEO) satellite networks, which have global seamless coverage and low transmission delay time, play an important role in satellite-based IoT and edge computing \cite{8681409,8610431,8002583}.\par

For some remote areas without the coverage of terrestrial networks, LEO satellite networks can assist in gathering data from remote IoT users and transmitting them to cloud data centers on the ground for further processing \cite{8681409}. Due to the nature of LEO satellite networks, the transmission delay between remote IoT users and cloud data centers will be difficult to meet the real-time requirements of IoT users. Besides, the available network bandwidths will decrease to result in the network congestion as the number of IoT users increases. Considering the limited network bandwidths and real-time requirements, we can deploy edge servers on LEO satellites and provide edge computing services for remote IoT users to reduce their end-to-end delay \cite{8610431}, such as ocean transportation and smart grid \cite{7289337}. However, LEO satellites have limited resource capacities of computing, storage, bandwidth, and energy \cite{8896440}. In order to improve the operational efficiency of LEO satellite networks, multiple LEO satellites can provide computing services by the network function virtualization (NFV) technology \cite{7534741} for a IoT user in a cooperative manner.\par
\begin{figure}[tbp]
  \centering
  % Requires \usepackage{graphicx}
  \includegraphics[width = 0.9\columnwidth]{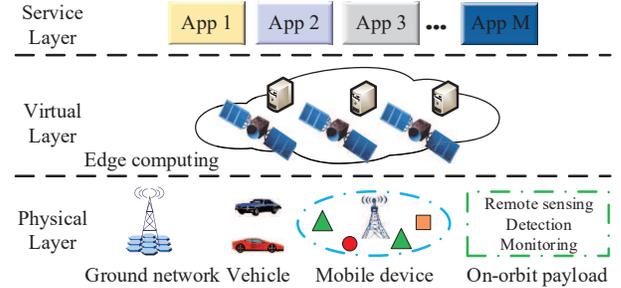}
  \caption{Satellite edge computing framework with NFV.}
  \label{Satellite edge computing framework with NFV}
\end{figure}
As a new paradigm in allocating network resources on-demand, NFV can support the decoupling of software and hardware equipments and enable service functions to run on commodity servers \cite{7534741}. By introducing NFV to satellite edge computing, we can abstract the available resources of LEO satellite networks into a resource pool and provide agile service provisioning for IoT users on-demand \cite{9062531}. Fig.~\ref{Satellite edge computing framework with NFV} shows the satellite edge computing framework with NFV, which consists of physical layer, virtual layer, and service layer. Physical layer consists of remote IoT sensors, actuators, ground networks, etc., and can provide sensing data collection and actuator interaction. For virtual layer, the available resources of LEO satellite networks, e.g., computing, storage, bandwidth, etc., can be abstracted into a resource pool by NFV for allocating available resources to IoT users in a flexible and scalable way. Service layer is responsible for managing LEO satellite network resources and orchestrating virtual network functions (VNFs) for IoT users.\par

For computation offloading in satellite edge computing, most of the existing work focuses on addressing the problem of resource allocation without considering the cooperative operation of multiple LEO satellites, where a computation task from a IoT user is considered as a whole for allocating network resources \cite{wang2019satellite,8672604}. However, these resource allocation strategies can not fully utilize the limited resources of LEO satellite networks. According to service function chaining (SFC), we can decompose a computation task from a IoT user into multiple VNFs in order and deploy these VNFs to multiple LEO satellites correspondingly to improve the network operational efficiency \cite{8690972,8406159}.\par

In this paper, we investigate the VNF placement problem in satellite edge computing. When a IoT user needs to offload its computation task to LEO satellites for obtaining computing services, a user request from the IoT user will be first sent to LEO satellite networks in order to obtain an access permission. The user request is considered as an SFC, consists of multiple VNFs in order, and carries the information concerning service type and resource requirements. Our aim is to maximize the number of allocated IoT users with minimum overall deployment cost, which is composed of energy consumption, bandwidth, and service delay costs. Considering that the user payoff is inversely proportional to the deployment cost, we establish the VNF placement problem with maximum network payoff, which is the sum of all allocated user payoffs.\par

To address the optimization problem of VNF placement, we formulate the problem as a potential game \cite{1996Potential}, which can be performed for making decisions in distributed computing as a non-cooperative game theory and widely used for handling the resource allocation problem in decentralized optimization algorithms \cite{7102682,8823046}. In potential game, a user request from a IoT user is considered as a player for finding a VNF placement strategy with maximum user payoff in a self-interested way and these players have potential conflicts in maximizing their payoffs \cite{8823046}. The payoff for each player can be improved by competing available resources with other players and then a Nash equilibrium can be acquired in a gradual iteration \cite{1996Potential}. Therefore, we implement a decentralized resource allocation algorithm based on a potential game, called as PGRA, to optimize the VNF placement solution. In each iteration, we traverse all the available paths for a user request to find a feasible strategy with maximum user payoff, where the Viterbi algorithm \cite{7469866} is used to address the VNF placement problem for each path. We assume that the resource allocation strategies for these players can be shared by a message synchronization mechanism. Our main contributions of this paper are summarized as follows:
\begin{itemize}
  \item In the perspective of LEO satellite networks, we build the VNF placement problem with maximum network payoff, which is an integer non-linear programming problem. Our aim is to maximize the number of allocated IoT users while minimizing the overall deployment cost including energy consumption, bandwidth, and service delay costs.
  \item To address the VNF placement problem, we formulate the problem as a potential game and analyze the problem by a game-theoretical approach. We implement a decentralized resource allocation algorithm based on a potential game to obtain an approximate strategy profile by finding a Nash equilibrium, where the Viterbi algorithm is used to deploy the VNFs for each user request.
  \item We conduct the experiments to simulate and evaluate the performance of the proposed PGRA algorithm in LEO satellite networks. The simulation results show that the proposed PGRA algorithm outperforms two existing baseline algorithms of Greedy and Viterbi.
\end{itemize}

The remainder of this paper is organized as follows. Section \ref{Related Work} briefly reviews related work about resource allocation in satellite edge computing and decentralized algorithms. Section \ref{System Model} introduces the system model of VNF placement in satellite edge computing. In Section \ref{Problem Formulation}, we model the problem of VNF placement with maximum network payoff and prove it to be NP-hard. The VNF placement problem is formulated as a potential game and a decentralized resource allocation algorithm is implemented for tackling the problem in Section \ref{VNF Placement Game and Proposed Algorithm}. Section \ref{Performance Evaluation} discusses the performance of the proposed PGRA algorithm in LEO satellite networks. Finally, we provide the conclusion of this paper in Section \ref{Conclusion}.\par

\section{Related Work}\label{Related Work}
In satellite edge computing, most of the existing work focuses on offloading computation tasks from IoT devices to satellite edge nodes \cite{wang2019satellite,denby2019orbital,9013467}. In \cite{wang2019satellite}, considering traditional satellites as space edge computing nodes, the authors presented an approach of satellite edge computing to share on-board resources for IoT devices and provided computing services combined with the cloud. The orbital edge computing for nano-satellite constellations was discussed by formulation flying in \cite{denby2019orbital}. A fine-grained resource management in satellite edge computing was presented by the advanced K-means algorithm in \cite{9013467}. This existing literature considers a computation task as a whole to allocate the network resources. In this paper, we assume that each user request is viewed as an SFC and consists of multiple VNFs in order. We need to deploy these VNFs to LEO satellites and route traffic flows between two adjacent VNFs by inter-satellite links.\par

In order to effectively utilize the limited network resources, the VNF orchestration in software defined satellite networks has been investigated in \cite{8690972,8406159,9062531}. The authors in \cite{9062531} formulated the SFC planning problem as an integer non-linear programming problem and proposed a greedy algorithm to address it. The authors in \cite{8690972} discussed the problem of VNF placement to minimize the cost in software defined satellite networks and proposed a time-slot decoupled heuristic algorithm to solve this problem. In \cite{8406159}, an approach of deploying SFCs in satellite networks was presented to minimize the end-to-end service delay. In this paper, we jointly consider three deployment costs of network energy, network bandwidth, and user service delay to formulate the VNF placement problem with maximum network payoff.\par

As the number of remote IoT devices increases, decentralized mechanisms of network resource management have been a research topic in distributed networks \cite{8519737,8360511,8327930}, where potential game is widely used to address the problem of resource allocation in distributed computing \cite{7102682,8823046,8945402,8039433}. The authors in \cite{7102682} proposed a game-theoretical algorithm to minimize the number and power of anchor nodes in a wireless sensor network. In \cite{8823046}, a cost-effective edge user allocation (EUA) problem in edge computing was presented to maximize the number of served users with minimum system cost and the authors designed a decentralized algorithm by a potential game to address the EUA problem. Furthermore, computation offloading in satellite edge computing was discussed by a game-theoretical approach in \cite{8945402}. However, in this paper, we formulate the VNF placement problem with maximum network payoff as a potential game and use a decentralized resource allocation algorithm to optimize the problem.\par

\section{System Model}\label{System Model}
In this section, we discuss the system model of VNF placement in satellite edge computing \cite{wei2019satellite,9048610}, including physical network, user requests, and the VNF placement problem.\par

\subsection{Satellite Network}

We denote a satellite network as a directed graph $G(V,E)$. The parameter $V=\left \{v_n|n=1,2,\cdots,N \right\}$ indicates the set of satellite nodes, where the number of satellite nodes is $N$. We assume that the set of resource types supported by satellite node $v_n$ is denoted by $R_n$ and each satellite node has limited resource capacities, where two resource types of central processing unit (CPU) and storage are considered in this paper. Let us denote the $r$-th resource capacity of satellite node $v_n$ by $C_{n}^{r}$, $r \in R_n$. The parameter $E$ indicates the set of links between satellite nodes. We assume that each satellite has four inter-satellite links (ISLs) with neighbouring satellites. For link $e$, $e \in E$, we denote the bandwidth capacity by $B_{e}$ and the transmission delay time by $t_{e}$. The parameter $P_{n_{1}}^{n_{2}}$ indicates the set of the $d$ shortest paths between satellites $v_{n_1}$ and $v_{n_2}$.\par

Due to the limited power of satellites, one of our aims is to minimize the overall energy cost of a satellite network. We introduce a power consumption model for edge servers on satellite nodes \cite{7279063}. An edge server can be considered in four states of \emph{on}, \emph{idle}, \emph{unavailable off}, and \emph{available off}. In an \emph{on} state, an edge server can provide computing services for IoT users and will consume energy. We denote $P_{n}^{on}$ as the average power consumption of an edge server on satellite node $v_n$ in an \emph{on} state. If an edge server on satellite node $v_n$ does not provide computing services for any IoT users the edge server is considered into an \emph{idle} state and we use $P_{n}^{idle}$ to indicate the average idle power consumption. When the idle time for an edge server on satellite node $v_n$ is over the maximum idle threshold $t_{n}^{idle}$ the edge server will be into an \emph{unavailable off} state. If an edge server is in an \emph{unavailable off} state, it can not provide computing services for IoT users in the next time slot. When the off time for an edge server on satellite node $v_n$ is greater than the minimum off threshold $t_{n}^{off}$ the edge server can be in an \emph{available off} state, that is, the edge server can provide computing services for IoT users in the next time slot. If an edge server in an \emph{available off} state needs to provide computing services for IoT users there will exist a setup procedure for the edge server, where the state of the edge server will be converted from \emph{off} to \emph{on}. We assume that the period of the setup procedure is $1$ time slot and the average setup power consumption is considered as the maximum power $P_{n}^{max}$ of an edge server on satellite node $v_n$. For \emph{available} and \emph{unavailable} states, the power consumption of an edge server is zero. Note that a satellite node can be considered as a router for routing traffic flows when its edge server is in an \emph{off} state. Based on the above discussion, the state transition diagram of an edge server is shown in Fig.~\ref{State transition diagram of an edge server}.\par
\begin{figure}[tbp]
  \centering
  % Requires \usepackage{graphicx}
  \includegraphics[width = 0.9\columnwidth]{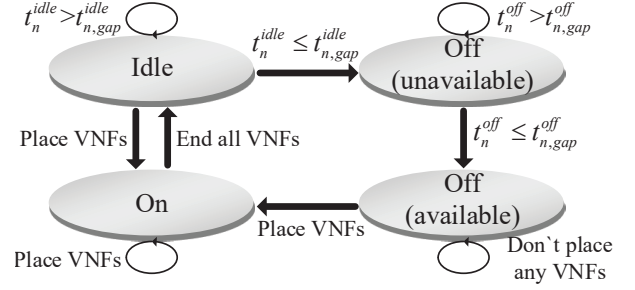}
  \caption{State transition diagram of an edge server.}
  \label{State transition diagram of an edge server}
\end{figure}
\subsection{User Requests}
When a IoT user needs to offload its computation task to satellites, a user request will be first sent to the LEO satellite network for obtaining an access permission, where the user request consists of multiple VNFs in specific order and can be considered as an SFC. We denote a set of user requests by $U=\left \{u_m|m=1,2,\cdots,M \right\}$ with $M$ user requests. The user request $u_{m}$, $u_{m} \in U$ is defined as a directed graph $G(F_{m},H_{m})$. The set $F_{m}=\left \{f_{m,1}=s_m,f_{m,2},\cdots,f_{m,\left |F_{m} \right |}=d_m \right\}$ indicates the set of VNFs for user request $u_m$, where $f_{m,i}$ indicates the $i$-th VNF of user request $u_m$, while the parameters $s_m$ and $d_m$ indicate the source and the destination of user request $u_m$, respectively. In satellite edge computing, we assume that the results of computation tasks processed by satellite nodes can be sent back to IoT users or transmitted to cloud data centers on ground. In these scenarios, the source and the destination of a user request can either be the same node or not. The resource requirements of each VNF include computing, storage, and execution time. We use $c_{m,i}^{r}$ to indicate the $r$-th resource requirement of $f_{m,i}$ and $t_{m,i}$ to represent the execution time of $f_{m,i}$. The set $H_{m}$ describes the set of edges for user request $u_m$. An edge between $f_{m,i_1}$ and $f_{m,i_2}$ is denoted by $h_{m}^{i_{1},i_{2}}$ and the bandwidth requirement of edge $h_{m}^{i_{1},i_{2}}$ is denoted by $b_{m}^{i_{1},i_{2}}$ accordingly. We define the maximum acceptable delay time for user request $u_m$ as $t_{m}^{delay}$.\par

\subsection{VNF Placement Problem}
\begin{figure}[tbp]
  \centering
  % Requires \usepackage{graphicx}
  \includegraphics[width = 0.9\columnwidth]{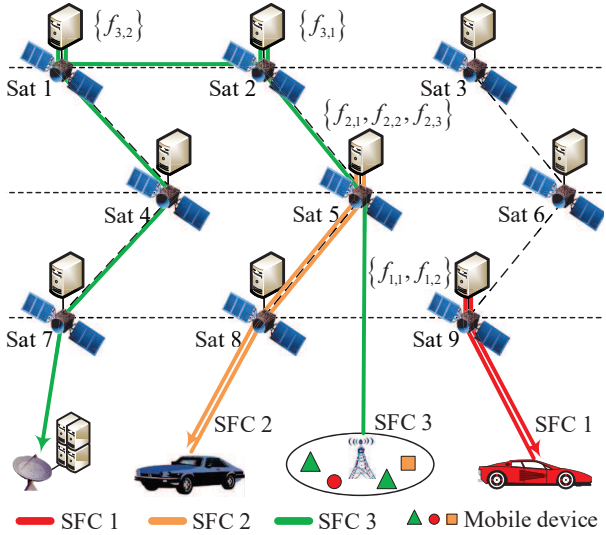}
  \caption{Example of placing VNFs for three IoT users.}
  \label{Example of placing VNFs for three IoT users}
\end{figure}
In this paper, we discuss the problem of dynamically deploying VNFs over varying time slots. A batch processing mode is simply used for allocating available resources of a LEO satellite network to user requests. We assume that a batch of user requests arrive concurrently at the beginning of each time slot and they can make decisions for deploying the VNFs to satellite nodes by their optimization strategies, where user requests have different resource requirements and maximum acceptable delay time, however, the source and the destination for each user request can be known. In addition, we assume that the satellite network topology is constant during the computing service period for each user request. Thus, the variability of a satellite network does not need to be considered when we deploy VNFs for user requests to satellite nodes. For deploying VNFs in dynamic cloud environment, we first abstract all available resources of a satellite network into a resource pool. Before performing resource allocation algorithms for the VNF placement, we need to free the resources that are used by the completed user requests in the previous time slot into the resource pool as available resources for new user requests in the current time slot. Then under remaining available resource and service requirement constraints, we can run resource allocation algorithms to orchestrate VNFs for new user requests with maximum network payoff. Our aim is to provide computing services for as many user requests as possible with minimum overall deployment cost, which consists of energy, bandwidth, and service delay costs.\par

For a user request, the computation task can be offloaded to satellites by different resource allocation strategies. However, the VNF placement strategy has an impact on the network payoff. Fig.~\ref{Example of placing VNFs for three IoT users} shows an example of placing VNFs for three IoT users. There are 9 satellite nodes, which are represented by $\left \{Sat1,Sat2,\cdots,Sat9\right\}$, and each satellite node deploys an edge server. The number of inter-satellite links for a satellite is 4. The user requests are composed of multiple specific VNFs in sequence. The three user requests can be described by $SFC1=\left \{s_1,f_{1,1},f_{1,2},d_1\right\}$, $SFC2=\left \{s_2,f_{2,1},f_{2,2},f_{2,3},d_2\right\}$, and $SFC3=\left \{s_3,f_{3,1},f_{3,2},d_3\right\}$, respectively. We assume that source $s_1$ and destination $d_1$ for $SFC1$ are on the same satellite node $Sat9$, source $s_2$ and destination $d_2$ for $SFC2$ are also on the same satellite node $Sat8$. In addition, source $s_3$ and destination $d_3$ for $SFC3$ are on satellite nodes $Sat4$ and $Sat7$, respectively. We deploy VNFs $f_{1,1}$ and $f_{1,2}$ for $SFC1$ to satellite node $Sat9$, VNFs $f_{2,1},f_{2,2}$, and $f_{2,3}$ for $SFC2$ to satellite node $Sat5$. For $SFC3$, we deploy VNF $f_{3,1}$ to satellite node $Sat2$ and VNF $f_{3,2}$ to satellite node $Sat1$, respectively. For $SFC1$, all VNFs are deployed on satellite node $Sat9$. The routing path for $SFC2$ is $\left \{Sat8,Sat5,Sat8\right\}$ and for $SFC3$ is $\left \{Sat5,Sat2,Sat1,Sat4,Sat7\right\}$. The strategies for the three IoT users are shown in Table \ref{VNF Placement Solutions for Three IoT Users}.\par
\begin{table}[tbp]
  \centering
  \caption{VNF Placement Solutions for Three IoT Users.}
  \label{VNF Placement Solutions for Three IoT Users}%
  \resizebox{\columnwidth}{!}{
    \begin{tabular}{cc}
    \hline
    User & Strategy \\
    \hline
    SFC1  & $s_1$(Sat9)$\rightarrow$ $f_{1,1}$(Sat9) $\rightarrow$ $f_{1,2}$(Sat9)$\rightarrow$ $d_1$(Sat9) \\
    \hline
    SFC2  & $s_2$(Sat8) $\rightarrow$ $f_{2,1}$(Sat5) $\rightarrow$ $f_{2,2}$(Sat5) $\rightarrow$ $f_{2,3}$(Sat5) $\rightarrow$ $d_2$(Sat8) \\

    \hline
    SFC3  & $s_3$(Sat5) $\rightarrow$ $f_{3,1}$(Sat2) $\rightarrow$ $f_{3,2}$(Sat1) $\rightarrow$ Sat4 $\rightarrow$ $d_3$(Sat7)\\
    \hline
    \end{tabular}%
  }
\end{table}%
\section{Problem Formulation}\label{Problem Formulation}
In this section, the problem of VNF placement is proposed by a mathematical method in satellite edge computing and proved to be NP-hard.
\subsection{Problem Description}
\begin{table}[tbp]
  \caption{List of Symbols.}
  \label{List of Symbols}
  \centering
  \resizebox{\columnwidth}{!}{
  \begin{tabular}{cl}%{p{0.04\textwidth}<{\centering}|p{0.42\textwidth}}
  \hline
  \multicolumn{2}{c}{\bfseries Satellite Network}\\
  \hline
  $V$ & Set of satellites with the number of $N$.\\
  $v_n$ & The $n$-th satellite.\\
  $R_n$ & Set of resources offered by satellite $v_n$.\\
  $C_{n}^{r}$ & The $r$-th resource capacity for satellite $v_n$.\\
  $P_{n}^{idle}$ & Idle power of an edge server on satellite $v_n$.\\
  $P_{n}^{on}$ & Active power of an edge server on satellite $v_n$.\\
  $P_{n}^{max}$ & Maximum power of an edge server on satellite $v_n$.\\
  $t_{n}^{idle}$ & Maximum idle time of an edge server on satellite $v_n$.\\
  $t_{n}^{off}$ & Minimum off time of an edge server on satellite $v_n$.\\
  $E$ & Set of links between satellites.\\
  $e$ & The $e$-th link.\\
  $B_{e}$ & Bandwidth capacity for link $e$.\\
  $t_{e}$ & Transmission delay for link $e$.\\
  $P_{n_{1}}^{n_{2}}$ & Set of the $d$ shortest paths between $v_{n_{1}}$ and $v_{n_{2}}$.\\
  \hline
  \multicolumn{2}{c}{\bfseries User Requests}\\
  \hline
  $U$ & Set of $M$ user requests.\\
  $u_{m}$ & The $m$-th user request.\\
  $F_{m}$ & Set of VNFs for user request $u_m$.\\
  $f_{m,i}$ & The $i$-th VNF for $u_m$.\\
  $s_m,d_m$ & Source and destination of user request $u_m$.\\
  $c_{m,i}^{r}$ & The $r$-th resource requirement for VNF $f_{m,i}$.\\
  $t_{m,i}$ & Execute time for VNF $f_{m,i}$.\\
  $H_{m}$ & Set of edges for user request $u_m$.\\
  $h_{m}^{i_{1},i_{2}}$ & Edge between VNFs $f_{m,i_{1}}$ and $f_{m,i_{2}}$.\\
  $b_{m}^{i_{1},i_{2}}$ & Bandwidth resource requirement for $h_{m}^{i_{1},i_{2}}$. \\
  $t_{m}^{delay}$ & Maximum acceptable delay time for user request $u_m$. \\
  \hline
  \multicolumn{2}{c}{\bfseries Binary Decision Variables}\\
  \hline
  $x_{m,i}^{n}$ & $x_{m,i}^{n} = 1$ if $f_{m,i}$ is placed on satellite $v_n$ or $x_{m,i}^{n} = 0$.\\
  $y_{m,p}^{i_{1},i_{2}}$ & $y_{m,p}^{i_{1},i_{2}}=1$ if path $p$ is used by $h_{m}^{i_{1},i_{2}}$ or $y_{m,p}^{i_{1},i_{2}}=0$.\\
  \hline
  \multicolumn{2}{c}{\bfseries Variables}\\
  \hline
  $q_{e}^{p}$ & $q_{e}^{p}=1$ if link $e$ is used by path $p$, otherwise $q_{e}^{p}=0$.\\
  $z_m$ & $z_m=1$ if $u_m$ is allocated, otherwise $z_m=0$.\\
  $\varphi_m^{bw}$ & Bandwidth cost for user request $u_m$.\\
  $\varphi_m^{power}$ & Energy cost for user request $u_m$.\\
  $\varphi_m^{delay}$ & Delay cost for user request $u_m$.\\
  $\varphi_m$ & Payoff function for user request $u_m$.\\
  $\Phi$ & Total payoff function. \\
  $\alpha$ & Weight value. \\
  \hline
  \end{tabular}
  }
\end{table}
In the perspective of a satellite network, we formulate the VNF placement problem as a constrained optimization problem with maximum network payoff in satellite edge computing, where the VNF placement problem is viewed as an integer non-linear programming problem. We assume that when a satellite network provides computing services for user requests, it can acquire uncertain user payoffs, which are affected by the deployment costs of user requests. Considering the limited physical resources of satellites and the real-time service requirement of IoT users, the goal of VNF placement in satellite edge computing is to reduce the energy consumption and bandwidth usage for a satellite network while minimizing the end-to-end user delay. Therefore, we assume that the deployment cost consists of three parts as: (1) energy cost, (2) bandwidth cost, and (3) service delay cost. The user payoff for a user request is non-negative and inversely proportional to the deployment cost. Therefore, the lower the deployment cost for a user request is, the higher the user payoff will be. When a user request is not deployed to satellite nodes, we denote the user payoff as zero. The overall network payoff is the sum of all user payoffs and our optimization aim is to maximize the network payoff within the network physical resource and service requirement constraints. That is, we make the number of allocated user requests maximum while minimizing the overall deployment cost. To better discuss the problem of VNF placement, we list the main symbols for our problem description in Table \ref{List of Symbols}.\par

Let us denote a binary decision variable $x_{m,i}^{n}=\left \{0,1\right \}$ to indicate whether VNF $f_{m,i}$ is placed on satellite node $v_n$, where $x_{m,i}^{n}=1$ if VNF $f_{m,i}$ is placed on satellite node $v_n$, otherwise $x_{m,i}^{n}=0$.\par

Another binary decision variable $y_{m,p}^{i_{1},i_{2}}=\left \{0,1\right \}$ is defined to describe which path is used by edge $h_{m}^{i_1,i_2}$. If path $p$ is used by $h_{m}^{i_1,i_2}$, then $y_{m,p}^{i_{1},i_{2}}=1$, otherwise $y_{m,p}^{i_{1},i_{2}}=0$.\par

We also denote a binary variable $q_{e}^{p}$ to indicate whether link $e$ is used by path $p$, as:
\begin{equation}\label{equation0}
q_{e}^{p}=
\begin{cases}
1& \text{if link $e$ is used by path $p$}, \\
0& \text{otherwise}.
\end{cases}
\end{equation}
A  binary variable $z_m=\left \{0,1\right \}$ is used to indicate whether user request $u_m$ is deployed to satellite nodes, as:
\begin{equation}\label{equation1}
z_m=
\begin{cases}
1& \text{if user request $u_m$ is deployed to satellites}, \\
0& \text{otherwise}.
\end{cases}
\end{equation}
When we deploy the VNFs for a user request to satellites, the deployment cost can be composed of energy consumption, bandwidth, and service delay costs, where the three costs are normalized values.\par

\noindent The bandwidth resources $u_m^{bw}$ used by user request $u_m$ can be described as:
\begin{equation}\label{equation2}
u_m^{bw} =\!\! \sum\limits_{h_m^{{i_1},{i_2}} \in {H_m}}\!\! {\sum\limits_{{v_{{n_1}}},{v_{{n_2}}}}\! {\sum\limits_{p \in P_{{n_1}}^{{n_2}}} {\sum\limits_{e \in p} {x_{m,{i_1}}^{{n_1}} \!\cdot\! x_{m,{i_2}}^{{n_2}} \!\cdot\! y_{m,p}^{{i_1},{i_2}} \!\cdot\! q_e^p \!\cdot\! b_m^{{i_1},{i_2}}} } } }.
\end{equation}
The total bandwidth resource capacities in satellite network $G(V,E)$ are $\sum\limits_{e \in E}B_e$. Therefore, the normalized bandwidth cost for user request $u_m$ can be described as:
\begin{equation}\label{equation3}
\varphi_m^{bw}=\frac{u_m^{bw}}{\sum\limits_{e \in E}B_e}.
\end{equation}
According to the power consumption model \cite{7279063}, we assume that the energy consumption for an edge server on a satellite is mainly produced by running CPU. The energy consumption for an active edge server on satellite node $v_n$ can be expressed as:
\begin{equation}\label{equation4}
P_n^{on}= P_n^{idle} + \frac{\sum\limits_{u_m \in U}\sum\limits_{f_{m,i} \in F_m}x_{m,i}^{n}\cdot c_{m,i}^{cpu}}{C_{n}^{cpu}}\cdot (P_n^{max} -P_n^{idle}).
\end{equation}
Considering that there are different energy consumptions for placing VNFs on an edge server in various running states, we divide the VNF placement into four solutions based on energy consumption costs as follows:
\begin{itemize}
  \item \emph{Case 1:} An edge server on satellite node $v_n$ is \emph{off} in the current time slot and will provide computing services for VNFs in the next time slot. Then if we place VNF $f_{m,i}$ on satellite node $v_n$, the power produced by VNF $f_{m,i}$ in the current time slot can be denoted as $u_{m,i}^{power}=0$.
  \item \emph{Case 2:} An edge server on satellite node $v_n$ is \emph{off} in the current time slot and will not provide computing services for any VNFs in the next time slot. Then if we place VNF $f_{m,i}$ on satellite node $v_n$, the power produced by VNF $f_{m,i}$ in the current time slot can be denoted as $u_{m,i}^{power}=P_{n}^{max}$.
  \item \emph{Case 3:} An edge server on satellite node $v_n$ is \emph{idle} in the current time slot and will not provide computing services for any VNFs in the next time slot. Then if we place VNF $f_{m,i}$ on satellite node $v_n$, the power produced by VNF $f_{m,i}$ in the current time slot could be denoted as $u_{m,i}^{power}=P_n^{idle} + \frac{x_{m,i}^{n}\cdot c_{m,i}^{cpu}}{C_{n}^{cpu}} \cdot (P_n^{max} -P_n^{idle}).$
  \item \emph{Case 4:} An edge server on satellite node $v_n$ is \emph{idle} or \emph{on} in the current time slot and will provide computing services for VNFs in the next time slot. Then if we place VNF $f_{m,i}$ on satellite node $v_n$, the power produced by VNF $f_{m,i}$ in the current time slot could be seen as $u_{m,i}^{power}=\frac{x_{m,i}^{n}\cdot c_{m,i}^{cpu}}{C_{n}^{cpu}}\cdot (P_n^{max} -P_n^{idle})$.
\end{itemize}
The total energy consumption in satellite network $G(V,E)$ is represented by $\sum\limits_{v_n \in V}P_n^{max}$. The normalized energy cost for user request $u_m$ can be expressed as:
\begin{equation}\label{equation5}
\varphi_m^{power}=\frac{1}{\sum\limits_{v_n \in V}P_n^{max}} \sum\limits_{f_{m,i}\in F_m}\sum\limits_{v_n \in V} x_{m,i}^{n}\cdot u_{m,i}^{power}.
\end{equation}
For a user request, the source-to-destination delay time consists of two parts: computing delay for VNFs and transmission delay. The computing delay time for user request $u_m$ can be denoted by:
\begin{equation}\label{equation6}
t_m^{exec}=\sum_{f_{m,i}\in F_m}\sum\limits_{v_n \in V} x_{m,i}^{n} \cdot t_{m,i}.
\end{equation}
The transmission delay time for user request $u_m$ can be indicated as:
\begin{equation}\label{equation7}
t_m^{trans} =\!\! \sum\limits_{h_m^{{i_1},{i_2}} \in {H_m}}\!\! {\sum\limits_{{v_{{n_1}}},{v_{{n_2}}}}\! {\sum\limits_{p \in P_{{n_1}}^{{n_2}}} {\sum\limits_{e \in p} {x_{m,{i_1}}^{{n_1}} \!\cdot\! x_{m,{i_2}}^{{n_2}} \!\cdot\! y_{m,p}^{{i_1},{i_2}} \!\cdot\! q_e^p \!\cdot\! t_e} } } }.
\end{equation}
The maximum acceptable delay time for user request $u_m$ is $t_{m}^{delay}$, then we can denote the normalized service delay time cost for user request $u_m$ by:
\begin{equation}\label{equation8}
\varphi_m^{delay}=\frac{1}{t_{m}^{delay}}\left ( t_m^{exec}+ t_m^{trans}\right ).
\end{equation}
For user request $u_m$, the deployment cost is the weighted sum of energy consumption cost $\varphi_m^{power}$, bandwidth cost $\varphi_m^{bw}$, and service delay cost $\varphi_m^{delay}$, then the user payoff $\varphi_m$ can be indicated by:
\begin{equation}\label{equation9}
\varphi_m= (1 - \alpha_{1} \cdot \varphi_m^{bw} - \alpha_{2} \cdot \varphi_m^{power} - \alpha_{3} \cdot \varphi_m^{delay} ) \cdot z_m,
\end{equation}
where $\alpha_{1}$, $\alpha_{2}$, and $\alpha_{3}$, $\alpha_{1}+\alpha_{2}+\alpha_{3}=1$, are the weighted factors and can adjust the preferences of the three costs.\par
\noindent The overall network payoff $\Phi$ is the sum of all user payoffs and can be represented by:
\begin{equation}\label{equation10}
\Phi= \sum\limits_{u_m \in U} \varphi_m.
\end{equation}

In order to address the problem of VNF placement to maximize the overall network payoff, the following physical constraints need to be considered.\par
\noindent When the VNFs for user request $u_m$ are deployed to satellite nodes, each VNF $f_{m,i} \in F_m$ can be placed on one and only one satellite node. We describe the VNF deployment constraint as follows:
\begin{equation}\label{equation11}
\sum\limits_{v_n \in V} x_{m,i}^{n} = 1,\forall f_{m,i}\in F_m,\forall u_m \in U.
\end{equation}
For user request $u_m$, if two adjacent VNFs $f_{m,i_1}$ and $f_{m,i_2}$ are deployed on satellite nodes $v_{n_1}$ and $v_{n_2}$, respectively, we need to guarantee that there exists one path between $v_{n_1}$ and $v_{n_2}$ that can be used to route traffic flows from $f_{m,i_1}$ to $f_{m,i_2}$. The path selection constraint can be expressed by:
\begin{equation}\label{equation12}
\sum\limits_{p \in P_{n_1}^{n_2}}\!\!\! y_{m,p}^{i_1,i_2}\! = \!x_{m,i_1}^{n_1}\!\!\cdot \!\!x_{m,i_2}^{n_2}, f_{m,i_1},f_{m,i_2}\!\in \!F_m, v_{n_1},v_{n_2}\! \in \!V.
\end{equation}
When we place VNFs for user requests to satellite nodes, the resource requirements for each satellite node can not exceed its resource capacities. The satellite resource constraint can be described as:
\begin{equation}\label{equation13}
\sum\limits_{u_m \in U}\sum\limits_{f_{m,i}\in F_m} x_{m,i}^{n}\cdot c_{m,i}^{r}\leq C_{n}^{r},\forall v_{n}\in V,\forall r \in R_n.
\end{equation}
When we choose a path between two satellite nodes to route traffic flows, we also guarantee that the bandwidth requirements for each link can not be greater than the bandwidth capacity. The bandwidth resource constraint for $\forall e \in E$ can be indicated by:
\begin{equation}\label{equation14}
\sum\limits_{u_m \in U}\! {\sum\limits_{h_m^{{i_1},{i_2}} \in {H_m}}\!\! {\sum\limits_{{v_{{n_1}}},{v_{{n_2}}}}\! {\sum\limits_{p \in P_{{n_1}}^{{n_2}}} {\!\!\!\!{x_{m,{i_1}}^{{n_1}} \!\!\cdot\! x_{m,{i_2}}^{{n_2}} \!\cdot\! y_{m,p}^{{i_1},{i_2}} \!\cdot\! q_e^p \!\!\cdot\! b_m^{{i_1},{i_2}}} } } }}\!\!\leq \!\!B_{e}.
\end{equation}
For user request $u_m$, the source-to-destination delay time is less than the maximum acceptable delay time. The service delay time constraint can be described as:
\begin{equation}\label{equation15}
t_m^{exec}+ t_m^{trans}\leq t_{m}^{delay},\forall u_m \in U,
\end{equation}
where the maximum acceptable delay time for a user request is the sum of the executed time for all VNFs and the acceptable path transmission time. In this paper, we assume that the acceptable path transmission time is equal to the average transmission time of all source-to-destination paths $P_{s_m,all}^{d_m}$ in satellite network $G(V,E)$. Thus, we can denote the maximum acceptable delay time for user request $u_m$ by:
\begin{equation}\label{equation16}
t_{m}^{delay} = \sum_{f_{m,i}\in F_m}t_{m,i} + \frac{1}{\left |P_{s_m,all}^{d_m}\right |}\sum\limits_{p\in P_{s_m,all}^{d_m}}\sum\limits_{e\in p} t_{e}.
\end{equation}
In addition, we need to ensure that the idle time $t_{n,gap}^{idle}$ for an edge server on satellite $v_n$ can not exceed the maximum idle time threshold $t_{n}^{idle}$. For an edge server on satellite $v_n$, we denote the earliest idle time in the current \emph{idle} state by $t_{n,idle}^{old}$ and the current idle time by $t_{n,idle}^{new}$. The idle time constraint for an edge server on satellite $v_n$ can be indicated by:
\begin{equation}\label{equation17}
t_{n,gap}^{idle}=t_{n,idle}^{new} - t_{n,idle}^{old}\leq t_{n}^{idle},\forall v_n \in V.
\end{equation}
We also guarantee that the off time $t_{n,gap}^{off}$ for an edge server on satellite $v_n$ should be greater than the minimum off time threshold $t_{n}^{off}$. For an edge server on satellite $v_n$, we use $t_{n,off}^{old}$ to indicate the earliest off time in the current \emph{off} state and $t_{n,off}^{new}$ to indicate the current off time. The off time constraint for an edge server on satellite $v_n$ can be indicated by:
\begin{equation}\label{equation18}
t_{n,gap}^{off}=t_{n,off}^{new} - t_{n,off}^{old} \geq t_{n}^{off},\forall v_n \in V.
\end{equation}
Under the physical network resource and service requirement constraints in equations \eqref{equation11}-\eqref{equation18}, we formulate the VNF placement problem in satellite edge computing using \eqref{equation10} as an optimization problem with maximum network payoff. The optimization problem in this paper can be indicated by:
\begin{equation}\label{equation19}
\begin{aligned}
\text{max}\quad & \Phi \\
s.t.\quad &  \eqref{equation11}-\eqref{equation18}.
\end{aligned}
\end{equation}
\subsection{Problem Analysis}
In this section, we reduce the capacitated plant location problem with single source constraints (CPLPSS) \cite{1995The} to the above VNF placement problem and prove that the formulated VNF placement problem is NP-hard \cite{7469866}.

In CPLPSS, a set of potential locations is given for deploying plants with fixed capacities and costs, and goods from the plants need to be provided to a set of customers with fixed demands. There are transportation costs for supplying goods from the plants to the customers. In addition, the goods demanded by a customer are provided only by a single plant. The problem aims to find an optimal solution of placing plants within the capacity and demand constraints to minimize the total operational and transportation costs.

To reduce a CPLPSS problem to our proposed VNF placement, we re-construct the problem description. Satellite nodes indicate plants and satellite resources represent plant capacities. User requests can be viewed as customers and resource demands for their computation tasks are described as customer required goods. We assume that all demanded resources from a user request are offered only by a satellite node. The operational cost of a plant is denoted by energy consumption and the transportation cost can be indicated by bandwidth and source-to-destination delay costs. For the proposed VNF placement problem, maximum overall network payoff is equal to minimizing the sum cost of energy consumption, bandwidth, and service delay costs within the constant number of user requests. Thus, a CPLPSS problem can be reduced to the proposed optimization problem. As a CPLPSS problem is NP-hard, the VNF placement problem is also NP-hard.

\section{VNF Placement Game and Proposed Algorithm}\label{VNF Placement Game and Proposed Algorithm}
In this section, we formulate the VNF placement problem as a potential game and analyze its property by a game-theoretical approach. Then a decentralized resource allocation algorithm based on a potential game is proposed for addressing the VNF placement problem.

\subsection{VNF Placement Game}
Game theory is a mathematical tool for analyzing interactive decision-making processes. A non-cooperative game can be denoted by $\Gamma =\left \{U,A,\varphi\right \}$. The term $U$ indicates the set of players. The strategy of the $m$-th player is denoted by $a_m \in A_m$, where $A_m$ represents the strategy set of the $m$-th player. $A=\prod\limits_{m = 1}^M {{A_m}}$ denotes the strategy combination of all players and $a=\left \{a_1,a_2,\cdots,a_M\right \}$ indicates the strategies of all players. We denote the strategies of all players except the $m$-th player as $a_{-m}=\left \{a_1,\cdots,a_{m-1},a_{m+1},\cdots,a_M\right \}$ and the strategy combination of all players except the $m$-th player as $A_{-m}=\prod\limits_{j \neq m}^M {{A_j}}$. The term $\varphi_{m}(a_m) \in \varphi$ indicates the payoff of the $m$-th player with the strategy $a_m$ and the term $\varphi(a)=\left \{\varphi_{1}(a_1),\varphi_{2}(a_2),\cdots,\varphi_{M}(a_M)\right \}$ is the payoff set of all players.

For a non-cooperative game, the players can make strategy decisions in a self-interested way for maximizing their payoffs until a Nash equilibrium is generated, where each player can not unilaterally deviate its strategy for improving the payoff.
\begin{definition}\label{definition1}
(\textbf{Nash equilibrium}) A strategy decision profile $a^*=\left \{a_1^*,a_2^*,\cdots,a_M^*\right \}$ of all user requests is a Nash equilibrium if no player has an incentive for unilaterally deviating the strategy, such that,
\begin{equation}\label{equation20}
\varphi_{m}(a_m^*,a_{-m}^*)\ge \varphi_{m}(a_m,a_{-m}^*), \forall u_m \in U, \forall a_m \in A_m.
\end{equation}
\end{definition}
Before finding a Nash equilibrium of the game, we should ensure whether the game admits at least a Nash equilibrium. As a special instance of non-cooperative games, potential game possesses a pure strategy Nash equilibrium \cite{1996Potential}, where a global potential function is used to map the payoffs of all players. The game can be considered as an exact potential game if an exact potential function is admitted.
\begin{definition}\label{definition2}
(\textbf{Exact Potential Game}) A game is an exact potential game if, for an exact potential function $\Phi(a), \forall u_m \in U, a_m, a_m' \in A_m$ and $a_{-m} \in A_{-m}$, there is,
\begin{equation}\label{equation21}
\Phi(a_m',a_{-m})\! - \!\Phi(a_m,a_{-m})\! = \!\varphi_{m}(a_m',a_{-m})\! - \!\varphi_{m}(a_m,a_{-m}).
\end{equation}
\end{definition}
The key for the existence of a Nash equilibrium is to prove the VNF placement game is a potential game. For the VNF placement in satellite edge computing, user requests have potential conflicts in maximizing their payoffs and the strategy of a user request has an effect on that of other user requests. We formulate the VNF placement problem as a non-cooperative game, where $M$ user requests are $M$ players, the VNF placement solution for user request $u_m$ represents the decision strategy $a_m$ and the payoff for user request $u_m$ indicates the payoff of the $m$-th player. The term $A_m$ indicates the strategy set of user request $u_m$ and $A$ is the strategy combination of all user requests. The exact potential function is denoted by $\Phi(a)$, which is the sum of all user request payoffs. Then we prove that the VNF placement game is a potential game.
\begin{proposition}\label{proposition1}
The VNF placement game is an exact potential game, where the payoff of the $m$-th player is indicated by $\varphi_{m}(a_m)$ and the exact potential function is the network payoff $\Phi(a)$.
\end{proposition}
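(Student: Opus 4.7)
The plan is to verify the exact-potential-game condition of Definition~\ref{definition2} directly, taking $\Phi(a)=\sum_{u_j\in U}\varphi_j(a_j)$ as the candidate potential. I need to show that for every $u_m\in U$, every $a_m,a_m'\in A_m$, and every $a_{-m}\in A_{-m}$,
\begin{equation*}
\Phi(a_m',a_{-m})-\Phi(a_m,a_{-m})=\varphi_m(a_m',a_{-m})-\varphi_m(a_m,a_{-m}).
\end{equation*}
Because $\Phi$ is additive across players, the left-hand side expands as the desired $m$-term plus the cross sum $\sum_{j\neq m}[\varphi_j(a_m',a_{-m})-\varphi_j(a_m,a_{-m})]$, so the argument reduces to the claim that $\varphi_j$ is independent of $a_m$ whenever $j\neq m$.

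To establish this independence, I would dissect $\varphi_j$ via \eqref{equation9} into its three normalized components. The bandwidth term $\varphi_j^{bw}$ in \eqref{equation3} is a sum, over the links traversed by $u_j$'s own SFC, of $u_j$'s bandwidth demand $b_j^{i_1,i_2}$; no variable $x_{m,\cdot}^{\cdot}$ or $y_{m,\cdot}^{\cdot,\cdot}$ appears inside. The delay term $\varphi_j^{delay}$, built from $t_j^{exec}$ in \eqref{equation6} and $t_j^{trans}$ in \eqref{equation7}, is likewise indexed only by $u_j$'s decision variables and by the fixed link delays $t_e$. The energy term $\varphi_j^{power}$ in \eqref{equation5} needs the most care: I will pin down the current-slot state of each edge server (which among Case~1--Case~4 applies) from the previous-slot history handed to the batch at the start of the time slot, so that each $u_{j,i}^{power}$ is a function of $u_j$'s own CPU demand $c_{j,i}^{cpu}$ and the capacity $C_n^{cpu}$ alone. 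Finally, $z_j$ records whether $u_j$ is feasibly placed, and since the inter-player coupling is absorbed into the strategy sets $A_j$ (each restricted to placements that respect the constraints \eqref{equation11}--\eqref{equation18} given $a_{-j}$), $z_j$ too depends only on $a_j$.

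With these observations, every $j\neq m$ summand in the cross sum vanishes, equation~\eqref{equation21} holds, and $\Phi$ is an exact potential for the VNF placement game. The main obstacle is precisely the energy argument: if, contrary to the batch-processing model the authors adopt, the \emph{on}/\emph{idle}/\emph{off} status of a server in the current slot were co-determined by the other players' current decisions, then $u_{j,i}^{power}$ would introduce genuine coupling through the shared server state and one could at best hope for a weighted or ordinal potential rather than an exact one. Under the stated assumption that current-slot server states are inherited from the preceding slot, however, the decomposition is clean and Proposition~\ref{proposition1} follows; existence of a pure-strategy Nash equilibrium for the game is then immediate from the classical result on finite exact potential games~\cite{1996Potential}.
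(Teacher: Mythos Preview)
Your proof is correct and follows essentially the same approach as the paper: expand $\Phi$ as the sum of individual payoffs and observe that the $j\neq m$ summands cancel because each $\varphi_j$ depends only on player $j$'s own strategy. You are in fact more explicit than the paper, which in equation~\eqref{equation23} simply writes $\varphi_l(a_l)$ and cancels without ever justifying the independence that you verify component-by-component (bandwidth, delay, energy, and $z_j$).
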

\begin{proof}\label{proof1}
For $a_m',a_m \in A_m$, $a_{-m} \in A_{-m}$, according to equation \eqref{equation9}, there is,
\begin{equation}\label{equation22}
\Delta\varphi_{m} = \varphi_{m}(a_m',a_{-m}) - \varphi_{m}(a_m,a_{-m}).
\end{equation}
Similarly, according to equation \eqref{equation10}, we can obtain the potential function difference as:
\begin{equation}\label{equation23}
\begin{aligned}
\Delta\Phi\! & = \!\Phi(a_m',a_{-m}) - \Phi(a_m,a_{-m}) \\
 & \! = \!\!\left [\!\varphi_{m}(a_m',a_{-m}) \!+ \!\!\!\!\!\sum\limits_{u_l\neq u_m}^U \!\!\!\!\varphi_l(a_l)\! \right ] \!\!\!-\!\!\! \left [ \!\varphi_{m}(a_m,a_{-m}) \!+ \!\!\!\!\!\sum\limits_{u_l\neq u_m}^U \!\!\!\!\varphi_l(a_l)\!\right ]\\
 & \! = \!\varphi_{m}(a_m',a_{-m}) - \varphi_{m}(a_m,a_{-m}).
\end{aligned}
\end{equation}
There is $\Delta\Phi \equiv \Delta\varphi_{m}$ \cite{2015ChoiA,7102682}. Thus, we prove that the VNF placement game is an exact potential game and $\Phi(a)$ is an exact potential function.
\end{proof}
Considering that the VNF placement game is a potential game and has the finite improvement property \cite{2018A}, the VNF placement problem can be addressed by finding a Nash equilibrium in a finite gradual iteration. A decentralized resource allocation algorithm based on a potential game, which is discussed in the following subsection, is implemented to tackle the VNF placement problem.

\subsection{Decentralized Resource Allocation Algorithm}
\begin{algorithm}[tbp]
  \caption{Resource Allocation Based on a Potential Game.}
  \label{Resource Allocation Based on a Potential Game}
  \hspace*{0.02in} {\bf Input:} User requests $U$;\\
  \hspace*{0.02in} {\bf Output:} $a^{*}=\left[a_{1}^{*},a_{2}^{*},\cdots, a_{M}^{*}\right]$;

  \begin{algorithmic}[1]
  \STATE \textbf{Initialize:} $k=0,\forall u_m \in U,a_m(k)=\varnothing$;
  \WHILE {$k<K_{max}$}
  \FOR {$\forall u_m \in U$}
  \FOR {$\forall p \in P_{s_m}^{d_m}$}
  \STATE Search an optimal strategy $a'_m(k,p)$ for path $p$ by the Viterbi algorithm and compute the user payoff $\varphi (a'_m(k,p),a_{-m}(k))$;
  \ENDFOR
  \STATE Find the strategy $a'_m(k)$ with maximum user payoff according to\\ $a'_m(k) = \arg\max\limits_{a'_m(k,p)} \varphi (a'_m(k,p),a_{-m}(k))$;
  \IF {$a'_m(k) \neq a_m(k)$}
  \STATE Share decision $a'_m(k)$ with other user requests;
  \ENDIF
  \ENDFOR
  \STATE Run a competitive mechanism for all user requests, and update strategy $a'(k)$ with the winning decision $a'_m(k)$;
  \IF {$\left | \Phi(a'(k))- \Phi(a(k))\right |< \epsilon $}
  \STATE $a^{*}=a'(k)$ and break;
  \ELSE
  \STATE $k \leftarrow k+1$;
  \ENDIF
  \ENDWHILE
  \RETURN $a^{*}$;
  \end{algorithmic}
\end{algorithm}

As the VNF placement problem is NP-hard \cite{7469866}, we implement a decentralized resource allocation algorithm by a potential game (PGRA) to find an approximate solution. The proposed PGRA algorithm can perform on the satellite network to improve the real-time decision-making capacity, where multiple satellites share the network resource states and the VNF placement strategies by interacting with each other, but satellites does not need to exchange the information with IoT users during the running time of the proposed PGRA algorithm. The procedure of the proposed PGRA algorithm is shown in Algorithm \ref{Resource Allocation Based on a Potential Game}. The maximum number of iterations is $K_{max}$. At the beginning, the initial iteration time is $k=0$, for $\forall u_m \in U$, we initialize the strategy as $a_m(k)=\varnothing$ and the user payoff as $\varphi_{m}(a_m(k),a_{-m}(k))$ accordingly. In iteration time $k$, each user request $u_m \in U$ needs to find an optimal strategy $a'_m(k)$ with maximum user payoff $\varphi_{m}(a'_m(k),a_{-m}(k))$ while the strategies of other user requests remain unchanged. For obtaining the optimal strategy of user request $u_m$, we search the $d$ shortest paths between source $s_m$ and destination $d_m$ successively and the local optimal strategy $a'_m(k,p)$ for each path $p$ is calculated by the Viterbi algorithm \cite{8933111}, which will be discussed later. Thus we can acquire an optimal strategy of user request $u_m$ in the current iteration by:
\begin{equation}\label{equation24}
a'_m(k) = \arg\max\limits_{a'_m(k,p)} \varphi (a'_m(k,p),a_{-m}(k)).
\end{equation}
If $a'_m(k) \neq a_m(k)$, the strategy $a'_m(k)$ will be shared by a message synchronization mechanism to other user requests for competing available resources of the satellite network. In each iteration, only a user request, which makes the overall network payoff maximum, can win the opportunity for updating its decision-making strategy and other user requests need to keep their old decision-making strategies. Note that the strategy decision-making processes for all user requests are performed simultaneously in parallel. The iteration process will terminate when no user request has an incentive to deviate its strategy unilaterally or the number of iterations exceeds the maximum iteration threshold. The final strategy profile $a^{*}=\left[a_{1}^{*},a_{2}^{*},\cdots, a_{M}^{*}\right]$ for all user requests is a Nash equilibrium of the VNF placement game and represents an approximate solution of VNF placement.
\begin{table*}[tbp]
  \centering
  \caption{Simulation Parameters.}
  \label{Simulation Parameters}
    \resizebox{\textwidth}{!}{
    \begin{tabular}{cccccccc}
    \hline
    \multicolumn{8}{c}{\bfseries Satellite Network} \\
    \hline
    Name  & \multicolumn{3}{c}{Total Number of Satellites} & \multicolumn{2}{c}{Number of Planes} & \multicolumn{2}{c}{Number of Satellites per Plane} \\
    %\hline
    Value & \multicolumn{3}{c}{6,9,12,15} & \multicolumn{2}{c}{3} & \multicolumn{2}{c}{2,3,4,5} \\
    \hline
    \multicolumn{8}{c}{\bfseries Inter-Satellite Links} \\
    \hline
    Name  & \multicolumn{3}{c}{Distance for a Plane} & \multicolumn{2}{c}{Distance for different Planes} & \multicolumn{2}{c}{Bandwidth} \\
    %\hline
    Value & \multicolumn{3}{c}{600 km} & \multicolumn{2}{c}{400 km} & \multicolumn{2}{c}{100 Mbps} \\
    \hline
    \multicolumn{8}{c}{\bfseries Edge Servers on Satellites} \\
    \hline
    Name  & vCPUs & Memory & Idle Power & Maximum Active Power & Setup Power & Maximum Idle Time & Minimum Off Time\\
    %\hline
    Value & 112   & 192 GB & 49.9 W & 415 W & 415 W & 3 slots & 1 slot \\
    \hline
    \multicolumn{8}{c}{\bfseries User Requests} \\
    \hline
    Name  & VNFs  & vCPUs & Memory & Execution Time for VNFs & Bandwidth & \multicolumn{2}{c}{Running Time} \\
    %\hline
    Lower & 5     & 4     & 4 GB  & 10 ms & 10 Mbps & \multicolumn{2}{c}{1 slot} \\
    %\hline
    Upper & 10    & 8     & 16 GB & 30 ms & 30 Mbps & \multicolumn{2}{c}{4 slots} \\
    \hline
    \end{tabular}%
    }
\end{table*}%
\begin{figure}[tbp]
  \centering
  % Requires \usepackage{graphicx}
  \includegraphics[width = \columnwidth]{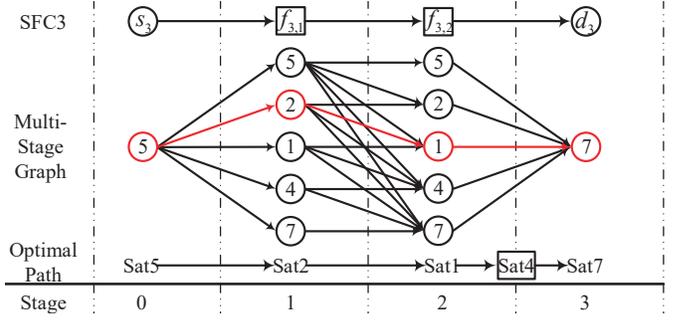}
  \caption{Example of placing VNFs for a user request by the Viterbi algorithm.}
  \label{Example of placing VNFs for a user request by the Viterbi algorithm}
\end{figure}

In this paper, we search the $d$ shortest source-to-destination paths for user request $\forall u_m \in U$ and use the Viterbi algorithm to place the VNFs for each path. the Viterbi algorithm can be viewed as a multi-stage graph for the states and their relationships and its aim is to find the most likely sequence of the states. The number of stages is equal to the length of an observed event sequence. Each node in a stage represents a possible state with a fixed cost for the current observed event and each stage in a graph consists of all possible states for the current observed event. The weighted edge between two states from adjacent stages represents a transmission cost. We can compute the cumulative cost of each state by the Viterbi algorithm in an increasing stage order, where the cumulative cost is composed of fixed costs and transmission costs. In the final stage, there are multiple possible states and each possible state corresponds to a path with a cumulative cost. We select a state path with minimum cumulative cost as the most likely path and obtain the best sequence by tracing the path. As the number of observed events and states increases the search space becomes large. Therefore, we can cut the search tree width to reduce the computational complexity.
\begin{algorithm}[tbp]
  \caption{Viterbi Algorithm.}
  \label{Viterbi Algorithm}
  \hspace*{0.02in} {\bf Input:} $u_m,p$;\\
  \hspace*{0.02in} {\bf Output:} $a'_m(k,p)$;
  \begin{algorithmic}[1]
  \STATE \textbf{Initialize:} $a'_m(k,p)=\varnothing,A_{layer}=\varnothing$;
  \STATE Obtain topological sort sequence $\Phi_{m}$ for the VNFs;
  \FOR {each $f_{m,i} \in \Phi_{m}$}
  \STATE ${A}'_{layer}=\varnothing$;
  \IF {$f_{m,i}=s_m$}
  \STATE $A_{layer}\leftarrow$ configure information about $s_m$;
  \STATE continue;
  \ENDIF
  \FOR {$\forall a_{m,i}^{pre}(k) \in A_{layer}$}
  \STATE Update network resource conditions under $a_{m,i}^{pre}(k)$;
  \STATE Obtain the set $\Omega_m$ of available satellites for path $p$;
  \FOR {each $u \in \Omega_m$}
  \STATE Deploy $f_{m,i}$ to satellite $u$ by strategy $a_{m,i}^{curr}(k)$;
  \IF {the constraints in \eqref{equation11}-\eqref{equation18} are satisfied}
  \STATE ${A}'_{layer}\leftarrow \left [a_{m,i}^{pre}(k), a_{m,i}^{curr}(k)\right ]$;
  \ENDIF
  \ENDFOR
  \ENDFOR
  \STATE List ${A}'_{layer}$ by user payoffs in descending order.
  \STATE ${A_{layer}\leftarrow A}'_{layer}[:B]$;
  \ENDFOR
  \STATE Obtain strategy $a'_m(k,p)$ with maximum user payoff;
  \RETURN $a'_m(k,p)$;
  \end{algorithmic}
\end{algorithm}

For finding an approximate solution of the VNF placement by the Viterbi algorithm, we construct the VNF placement states and their relationships as a multi-stage graph. For a user request, the SFC is considered as an observed event sequence, each VNF indicates an observed event and the number of the VNFs represents the number of stages. The strategy of deploying a VNF indicates a possible state in each stage and there has the VNF deployment cost accordingly. The path between two adjacent VNFs indicates the edge between two states from two adjacent stages and the path cost is equal to the edge cost. Thus, we can perform the Viterbi algorithm to obtain an approximate strategy of VNF placement for a user request. Fig.~\ref{Example of placing VNFs for a user request by the Viterbi algorithm} illustrates an example of placing VNFs for a user request by the Viterbi algorithm, where the approximate VNF placement strategy is indicated with a red line.

The Viterbi algorithm for the VNF placement is shown in Algorithm \ref{Viterbi Algorithm}. The input parameters are user request $u_m$ and path $p$. The output parameter is an approximate strategy $a'_m(k,p)$ for path $p$. Initially, we denote a state set of the first stage by $A_{layer}=\varnothing$ and set $a'_m(k,p)=\varnothing$. Then we can obtain the ordered VNF sequence $\Gamma_m$ by a topological sorting method. For each stage of VNF $f_{m,i}\in \Gamma_m$, we search all possible VNF placement states and calculate their cumulative costs. respectively. If $f_{m,i}=s_m$, we can directly update the configure information concerning $s_m$ to $A_{layer}$. If $f_{m,i} \neq s_m$, for $\forall a_{m,i}^{pre}(k) \in A_{layer}$, we first update network resource conditions by $a_{m,i}^{pre}(k)$ and obtain a set $\Omega_{m,i}$ of current available satellites for path $p$. Within the network resource and service requirement constraints, we deploy VNF $f_{m,i}$ to satellite $u \in \Omega_{m,i}$ by strategy $a_{m,i}^{curr}(k)$. When the constraints in equations \eqref{equation11}-\eqref{equation18} are satisfied, $a_{m,i}^{curr}(k)$ will be put into a new strategy set $A'_{layer}$, which is initialized as $A'_{layer}=\varnothing$ at the beginning of each stage. When each stage is over, we will sort $A'_{layer}$ by user payoffs in descending order and use the first $B$ paths from $A'_{layer}$ to update $A_{layer}$. If all VNFs are deployed we can obtain an approximate strategy $a'_m(k,p)$ with maximum user payoff.

For Algorithm \ref{Resource Allocation Based on a Potential Game}, the computation complexity can be indicated as $O(K_{max}Md)$, where $K_{max}$ is the maximum number of iterations, $M$ is the number of players for the current time slot, and $d$ represents the size of a candidate path set. When we perform the Viterbi algorithm to deploy the VNFs to satellite nodes, the search tree width is $B$, the number of satellite nodes for an available path is less than $N$, and the maximum number of VNFs for a user request is $F$. The computation complexity of Algorithm \ref{Viterbi Algorithm} can be described as $O(FBN)$.

\section{Performance Evaluation}\label{Performance Evaluation}
In this section, we make the experiments to evaluate the performance of the proposed PGRA algorithm for addressing the VNF placement problem in satellite edge computing. We setup the system parameters for performance evaluation. In order to investigate the effects of system parameters on the performance of the proposed PGRA algorithm, we design the experiments by the Taguchi method with two factors, which are the shortest paths $d$ between the source and the destination for a user request and the width $B$ of the Viterbi search tree. Furthermore, we compare the proposed PGRA algorithm with two existing centralized baseline algorithms of Viterbi \cite{7469866} and Greedy \cite{7332796} in terms of network payoff and percentage of allocated users. Finally, we discuss the performance of the proposed PGRA algorithm in on-line strategy.

\subsection{Simulation Setup}
\begin{figure*}[tbp]
  \centering
  % Requires \usepackage{graphicx}
  %\subfigure[Mean for 8 tasks]{\includegraphics[width=0.24\textwidth]{Fig/8-a.eps}
  \subfigure[Main effects for $M=10$]{\includegraphics[width=0.3\textwidth]{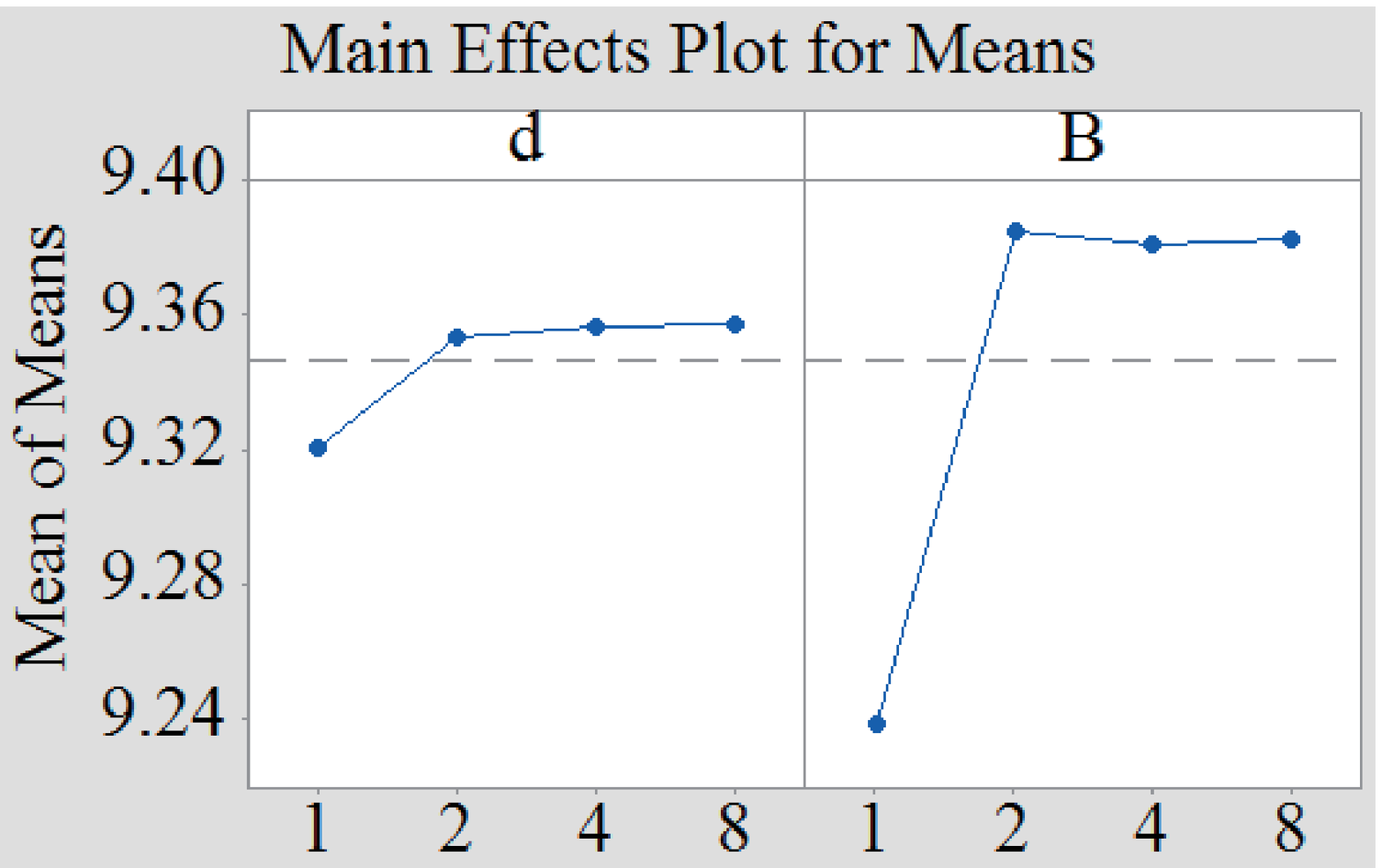}
  \label{Main effects for 30 user requests}}
  \subfigure[Main effects for $M=20$]{\includegraphics[width=0.3\textwidth]{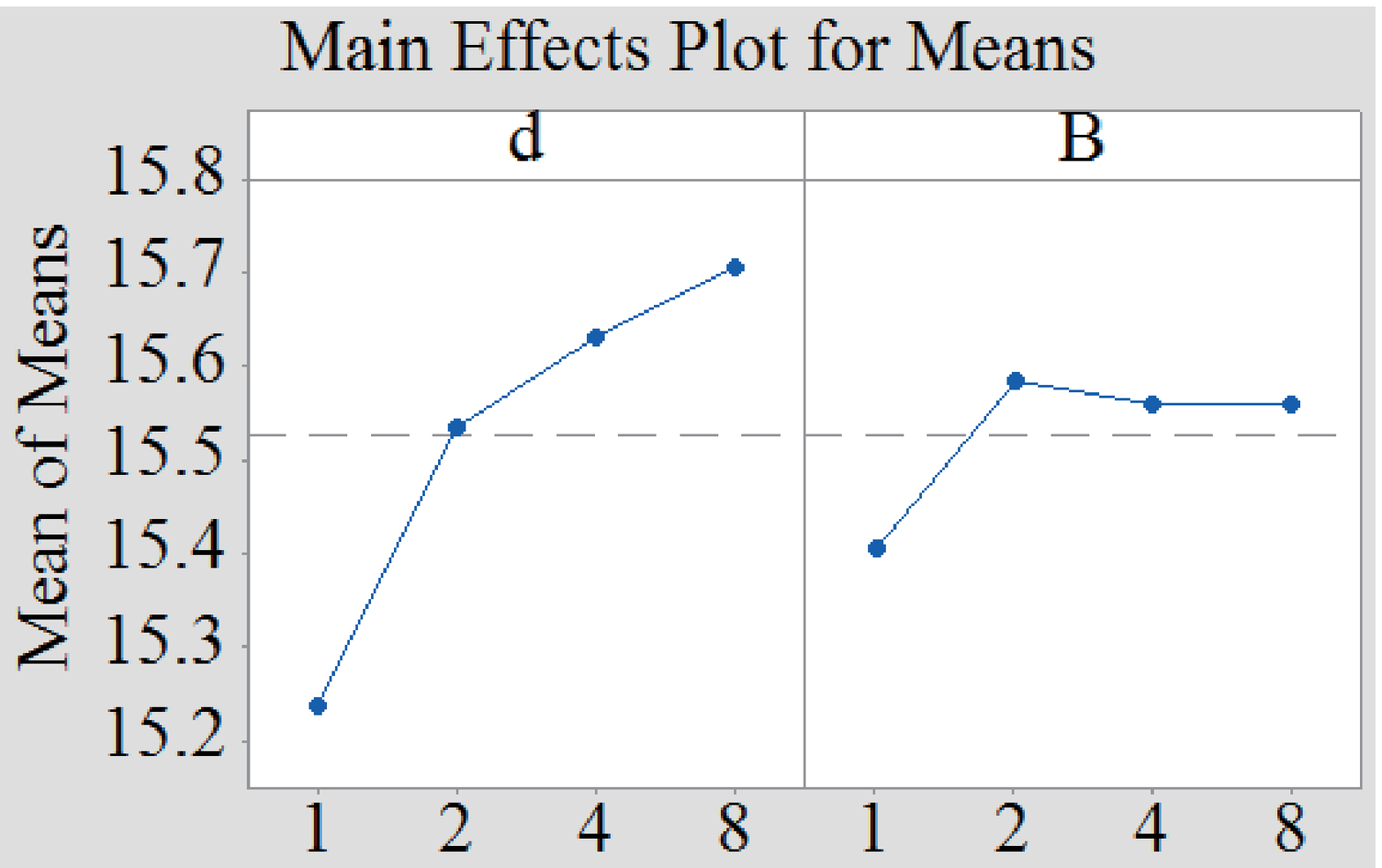}
  \label{Main effects for 30 user requests}}
  \subfigure[Main effects for $M=30$]{\includegraphics[width=0.3\textwidth]{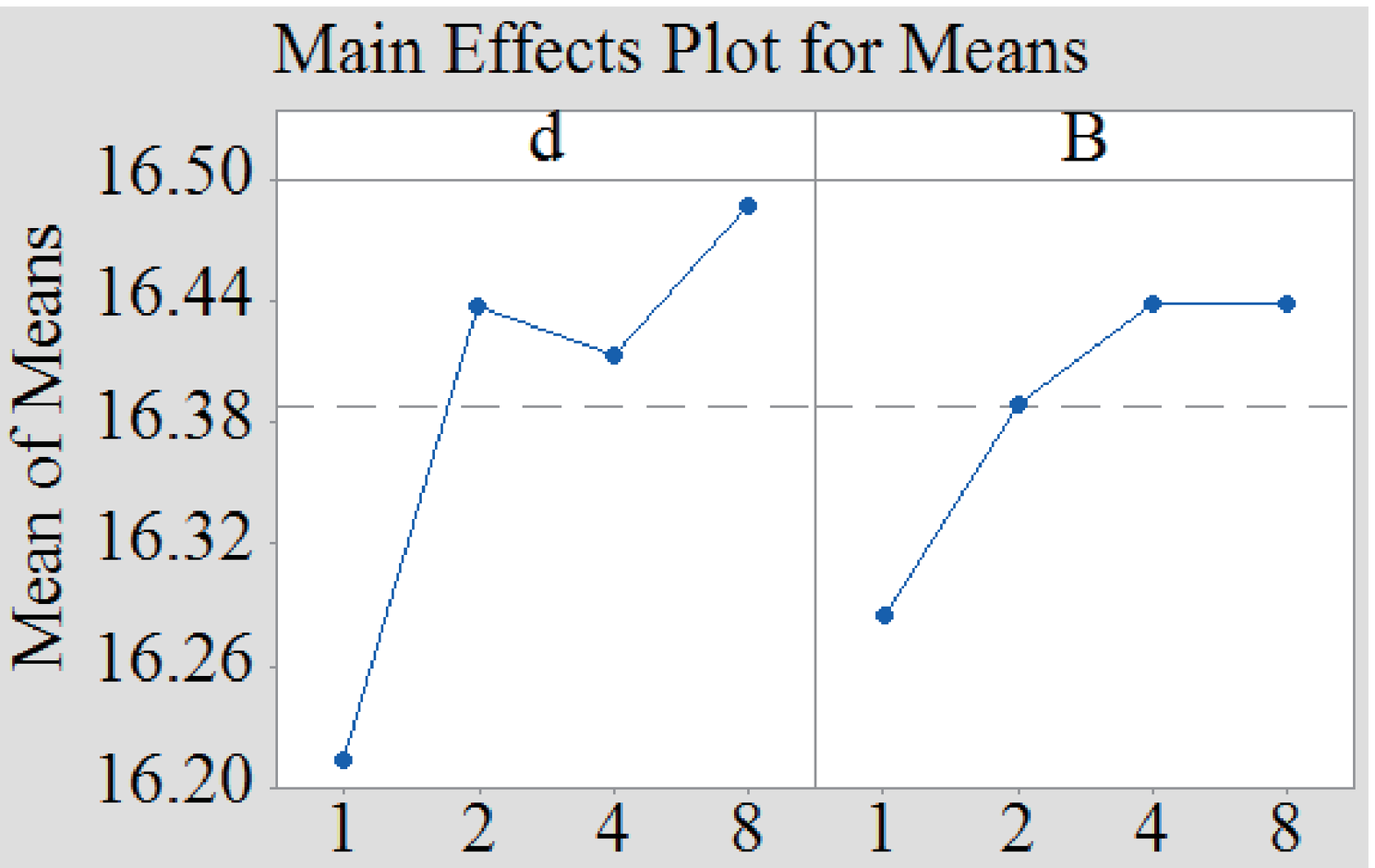}
  \label{Main effects for 30 user requests}}
  \caption{Main effects of two factors for different number $M=10,20,$ and $30$ of user requests.}
  \label{Main effects of two factors for different number of user requests}
\end{figure*}
In the simulation experiments, we set the number of LEO satellite nodes by 6, 9, 12, and 15, respectively, where there are 3 orbital planes and each plane consists of 2, 3, 4, and 5 satellite nodes accordingly. There is an edge server on each satellite node. For each edge server, the resource capacities are configured as 112 vCPUs and 192 GB Memory, we consider the idle power as 49.9 W and the maximum active power as 415 W \cite{2019SPECpower}. We assume that the maximum idle time interval is 3 time slots and the minimum off time interval is 1 time slot. In addition, we assume that the inter-satellite link distance for the same orbital plane is 600 km and for the different orbital planes is 400 km, respectively. The bandwidth capacity for each inter-satellite link is 100 Mbps.
\begin{table}[tbp]
  \centering
  \caption{Parameters for Taguchi Method.}
  \label{Parameters for Taguchi Method}%
  \resizebox{\columnwidth}{!}{
    \begin{tabular}{p{0.08\textwidth}<{\centering}p{0.07\textwidth}<{\centering}p{0.07\textwidth}<{\centering}p{0.07\textwidth}<{\centering}p{0.07\textwidth}<{\centering}}
    \hline
    \multirow{2}{*}{Factor} & \multicolumn{4}{c}{Level} \\
\cline{2-5}          & 1     & 2     & 3     & 4 \\
    \hline
    d     & 1     & 2     & 4     & 8 \\

    B     & 1     & 2     & 4     & 8 \\
    \hline
    \end{tabular}%\multirow{2}{*}
    }
\end{table}%
\begin{table}[tbp]
  \centering
  \caption{Orthogonal Table $L_{16}(4^2)$ and Network Payoff Results.}
  \label{Orthogonal Table and Network Payoff Results}%
    \resizebox{\columnwidth}{!}{
    \begin{tabular}{p{0.05\textwidth}<{\centering}p{0.04\textwidth}<{\centering}p{0.04\textwidth}<{\centering}p{0.07\textwidth}<{\centering}p{0.07\textwidth}<{\centering}p{0.07\textwidth}<{\centering}}
    \hline
    \multirow{2}{*}{Number} & \multicolumn{2}{c}{Factor} & \multicolumn{3}{c}{Network Payoff} \\
\cline{2-6}          & d     & B     & \multicolumn{1}{c}{M=10} & \multicolumn{1}{c}{M=20} & \multicolumn{1}{c}{M=30} \\
    \hline
    0     & 1     & 1     & 9.1585 & 15.1844 & 16.1857 \\
    1     & 1     & 2     & 9.3734 & 15.1888 & 16.2896 \\
    2     & 1     & 4     & 9.3748 & 15.2886 & 16.1899 \\
    3     & 1     & 8     & 9.3751 & 15.2891 & 16.1899 \\
    4     & 2     & 1     & 9.2622 & 15.3828 & 16.3832 \\
    5     & 2     & 2     & 9.3822 & 15.5843 & 16.3890 \\
    6     & 2     & 4     & 9.3832 & 15.5844 & 16.4879 \\
    7     & 2     & 8     & 9.3846 & 15.5844 & 16.4879 \\
    8     & 4     & 1     & 9.2680 & 15.4805 & 16.2845 \\
    9     & 4     & 2     & 9.3916 & 15.6820 & 16.3895 \\
    10    & 4     & 4     & 9.3825 & 15.6821 & 16.4884 \\
    11    & 4     & 8     & 9.3838 & 15.6814 & 16.4884 \\
    12    & 8     & 1     & 9.2680 & 15.5785 & 16.2846 \\
    13    & 8     & 2     & 9.3919 & 15.8785 & 16.4872 \\
    14    & 8     & 4     & 9.3828 & 15.6819 & 16.5862 \\
    15    & 8     & 8     & 9.3840 & 15.6812 & 16.5862 \\
    \hline
    \end{tabular}%
    }
\end{table}%

In order not to lose generality, we randomly generate the SFC and resource requirements for each user request. The number of VNFs is from 5 to 10. Each VNF, except source and destination, has a predecessor and a successor, and requires $\left[4,8\right]$ vCPUs and $\left[4 GB,16 GB \right]$ Memory, respectively. The execution time for each VNF is $\left[10,30\right]$ ms. The bandwidth demand between two adjacent VNFs for routing traffic flows is $\left[10,30\right]$ Mbps. The source and the destination are randomly generated from the set of satellite nodes and can be known in advance. In addition, we define the weighted values in equation \eqref{equation9} as $\alpha_{1}=\alpha_{2}=\alpha_{3}=\frac{1}{3}$. Table \ref{Simulation Parameters} summarizes the main simulation parameters in our evaluation. We use a commodity server to be the simulation platform, where the configuration information is i7-4790K CPU, 16 GB Memory, and Windows 10. The programming language is PYTHON.

\subsection{System Parameters Evaluation}
For the proposed PGRA algorithm, the performance results of addressing the VNF placement problem can be influenced by two important parameters, which are the shortest paths $d$ between the source and the destination for a user request and the width $B$ of the Viterbi search tree, respectively. Consequently, in a satellite network with 6 satellite nodes, we use the Taguchi method of design-of-experiment (DOE) \cite{2013An} to investigate the simulation results under different values of $d$ and $B$ \cite{2015A}. The two factors are denoted by $d$ and $B$, respectively, where each factor includes 4 levels, e.g., 1, 2, 4, and 8. The parameters for the Taguchi method \cite{2013An} are shown in Table \ref{Parameters for Taguchi Method}. The orthogonal table $L_{16}(4^2)$ consists of 16 instances and we run each instance, for $M=10$, 20, and 30, 10 times to obtain the average network payoffs, respectively. Table \ref{Orthogonal Table and Network Payoff Results} describes the orthogonal table and network payoff results in our simulation. The main effects of the two factors for different user requests are illustrated in Fig. \ref{Main effects of two factors for different number of user requests}. We can find from Fig. \ref{Main effects of two factors for different number of user requests} that the network payoff results for $M=10$, 20, and 30 are better as parameters $d$ and $B$ increase. Large $d$ can improve the exploration ability of the proposed PGRA algorithm by searching for a larger solution space. For the Viterbi algorithm, large $B$ can keep more possible states of the current stage to the next search stage and also promote the network performance. However, large $d$ and $B$ can lead to a high computational complexity of the proposed PGRA algorithm. When $d$ and $B$ are small, the performance of the proposed PGAR algorithm will degrade. Therefore, their values should be considered in a tradeoff way. According to the simulation results of the Taguchi method, we can find that the ideal values of $d$ and $B$ are 8 and 4, respectively.
\begin{figure*}[tbp]
  \centering
  % Requires \usepackage{graphicx}
  %\subfigure[Mean for 8 tasks]{\includegraphics[width=0.24\textwidth]{Fig/8-a.eps}
  \subfigure[Average energy cost]{\includegraphics[width=0.3\textwidth]{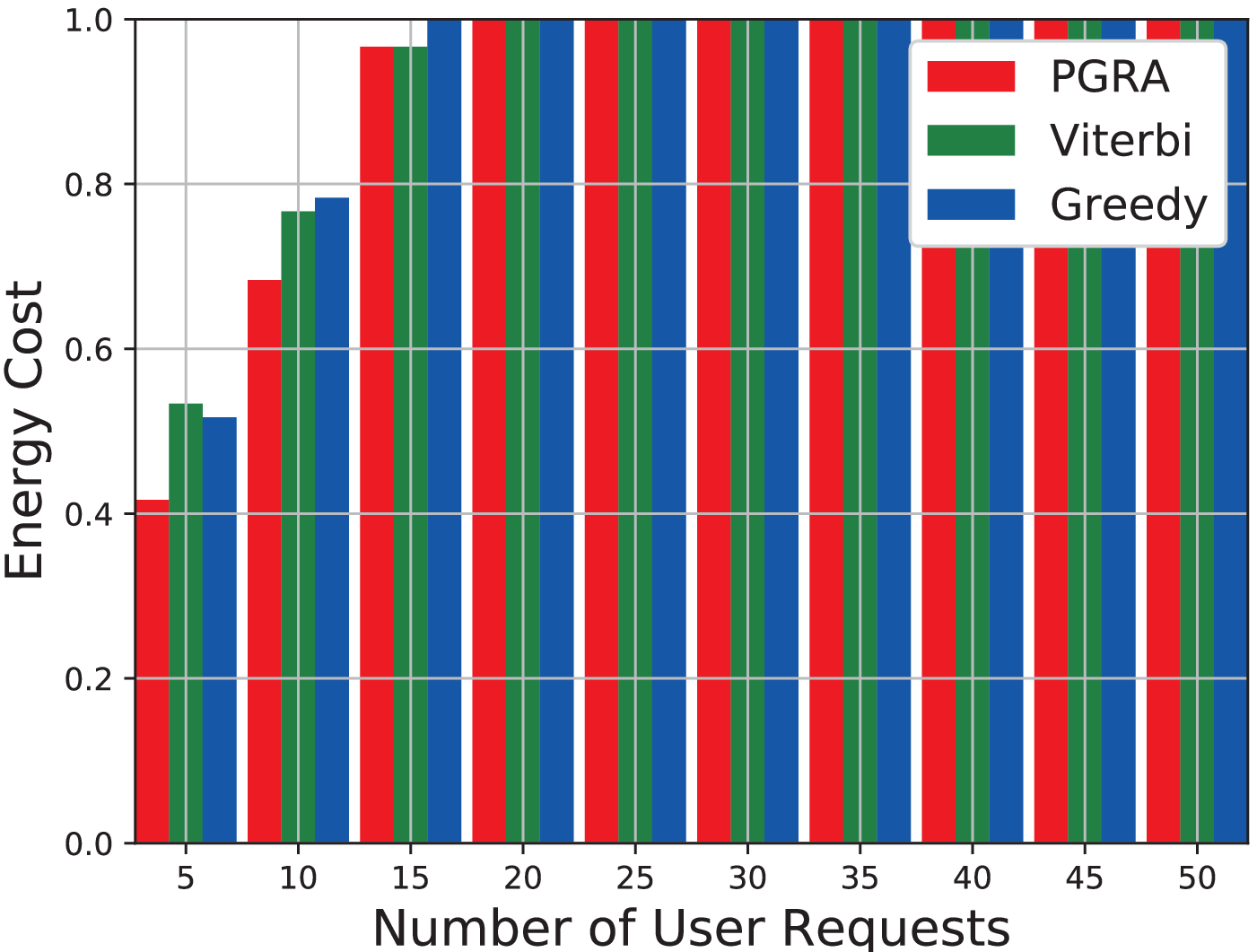}
  \label{Average energy cost}}
  \subfigure[Average bandwidth cost]{\includegraphics[width=0.3\textwidth]{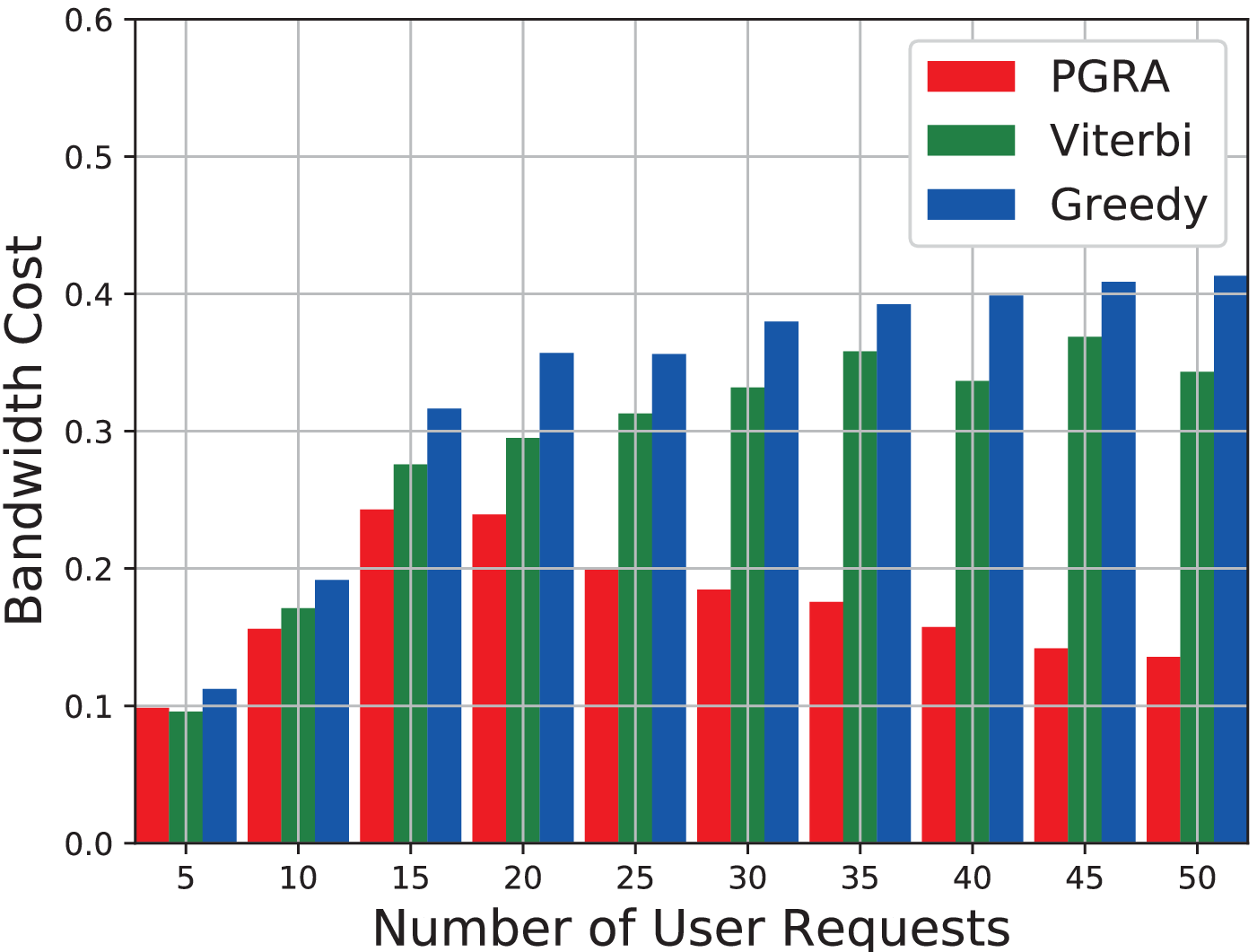}
  \label{Average bandwidth cost}}
  \subfigure[Average service delay cost]{\includegraphics[width=0.3\textwidth]{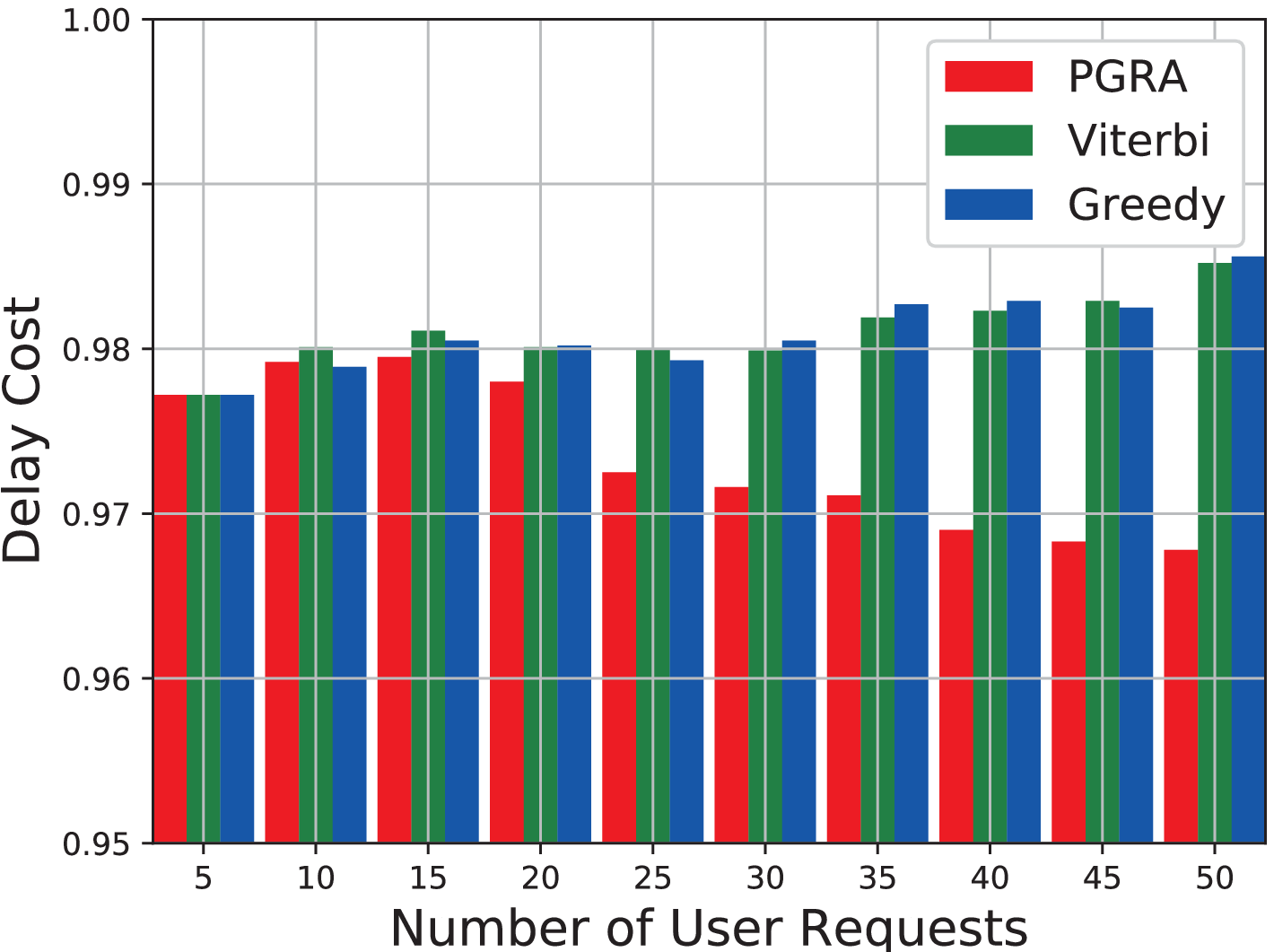}
  \label{Average service delay cost}}
  \subfigure[Average network payoff]{\includegraphics[width=0.3\textwidth]{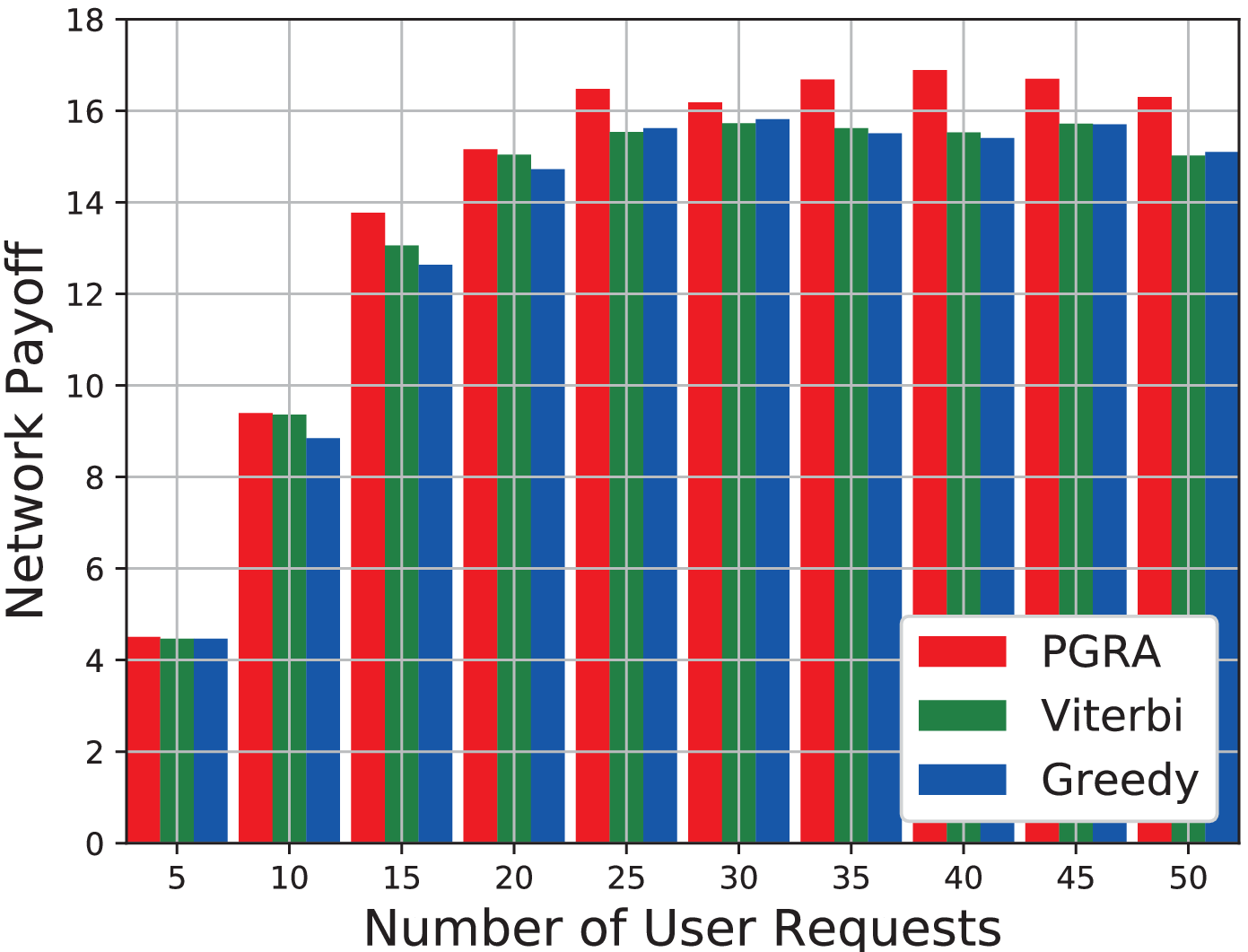}
  \label{Average network payoff}}
  \subfigure[Average percentage of allocated users]{\includegraphics[width=0.3\textwidth]{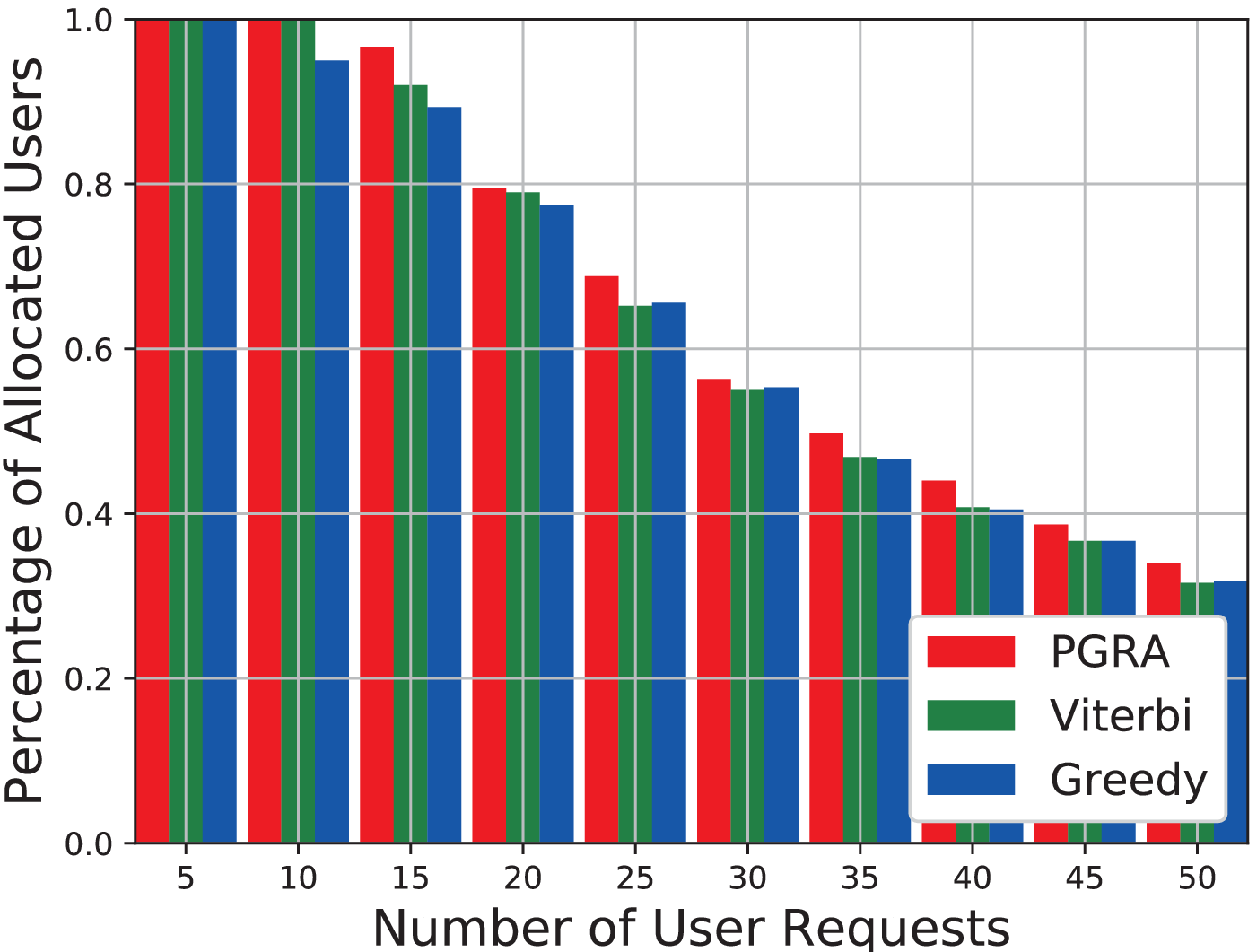}
  \label{Average percentage of allocated users}}
  \caption{Performance comparison with PGRA, Viterbi, and Greedy in a satellite network with 6 satellite nodes.}
  \label{Performance comparison with PGRA, Viterbi, and Greedy in a satellite network with 6 satellite nodes}
\end{figure*}

\subsection{Performance Comparison with Baseline Algorithms}
To further discuss the effectiveness of the proposed PGRA algorithm in terms of energy consumption, bandwidth, and service delay, we compare the proposed PGRA algorithm with two existing baseline algorithms of Viterbi \cite{7469866} and Greedy \cite{7332796}. Based on the parameter analysis of the Taguchi method, the values of $d$ and $B$ are set as 8 and 4, respectively. A satellite network with 6 satellite nodes is used to run the experiment $L=\left \{L_1,L_2,\cdots,L_{10}\right \}$, which is composed of 10 group experiments and the corresponding number of user requests is denoted by $\left \{5,10,\cdots,5*i,\cdots,50\right \}$. Each experiment is run 10 times and the average results are obtained.

The experiment results for the performance comparison with three optimization algorithms are shown in Fig.~\ref{Performance comparison with PGRA, Viterbi, and Greedy in a satellite network with 6 satellite nodes}. Fig.~\ref{Average energy cost} illustrates the average energy costs for different number of user requests. We can find from Fig.~\ref{Average energy cost} that the energy costs obtained by the proposed PGRA algorithm are better than that of Viterbi and Greedy when there is a small number of user requests, e.g., $M=5$, 10, and 15. For instance $L_2$ with 10 user requests, the average energy costs for PGRA, Viterbi, and Greedy are 0.6833, 0.7666, and 0.7833, respectively. The average energy cost obtained by the proposed PGRA algorithm reduces by $10.87\%$ for Viterbi and $12.77\%$ for Greedy. However, as the number of user requests increases, the energy costs are increasing until they reach the maximum values. Then new user requests can not be deployed to satellite nodes due to the limitation of network resource capacities.

The average bandwidth costs for deploying different user requests to satellite nodes are described in Fig.~\ref{Average bandwidth cost}. We can observe from Fig.~\ref{Average bandwidth cost} that the proposed PGRA algorithm performs better than the two baseline algorithms of Viterbi and Greedy. For the small number of user requests, such as $M=5$, 10, and 15, the performance of the proposed PGRA algorithm is slightly better than that of Viterbi and Greedy. For an example of $M=10$, the average bandwidth costs are 0.1560, 0.1711, and 0.1915 for PGRA, Viterbi, and Greedy, respectively. The performance improvement of the proposed PGRA algorithm is $8.82\%$ for Viterbi and $18.53\%$ for Greedy. As the number of user requests increases, we can observe that the performance differences between the proposed PGRA algorithm and the two baseline algorithms are also increasing. In the case of $M=35$, the average bandwidth costs for PGRA, Viterbi, and Greedy are 0.1756, 0.3581, and 0.3923, respectively. We can observe that the performance of the proposed PGRA algorithm improves by $50.96\%$ for Viterbi and $55.23\%$ for Greedy. That is due to the fact that the players in a potential game want to make decisions by competing with each others for optimizing their objectives in a self-interested behavior. Thus, under the limitation of network resource capacities, the players, which have better strategies of placing the VNFs, can win the opportunities of updating their strategy information. In our experiments, when there are enough user requests to require available network resources, these resource requirements are greater than the network resource capacities and the energy costs produced by the user requests tend to constant values. Therefore, the used bandwidth resources can be considered as an important optimization objective. For the 10 group experiments, the average performance improvement of the proposed PGRA algorithm is $40.05\%$ for Viterbi and $47.93\%$ for Greedy. In Fig.~\ref{Average service delay cost}, Average service delay costs for different user requests are indicated. Similar to the average bandwidth costs in Fig.~\ref{Average bandwidth cost}, we can observe that the average service delay costs obtained by the proposed PGRA algorithm are better than that of the two baseline algorithms. On average, the service delay costs for PGRA, Viterbi, and Greedy are 0.9734, 0.9811, and 0.9810, respectively. The performance of the proposed PGRA algorithm improves by $0.78\%$ over both Viterbi and Greedy.
\begin{figure*}[tbp]
  \centering
  % Requires \usepackage{graphicx}
  \subfigure[Satellite network with $N=9$]{\includegraphics[width=0.3\textwidth]{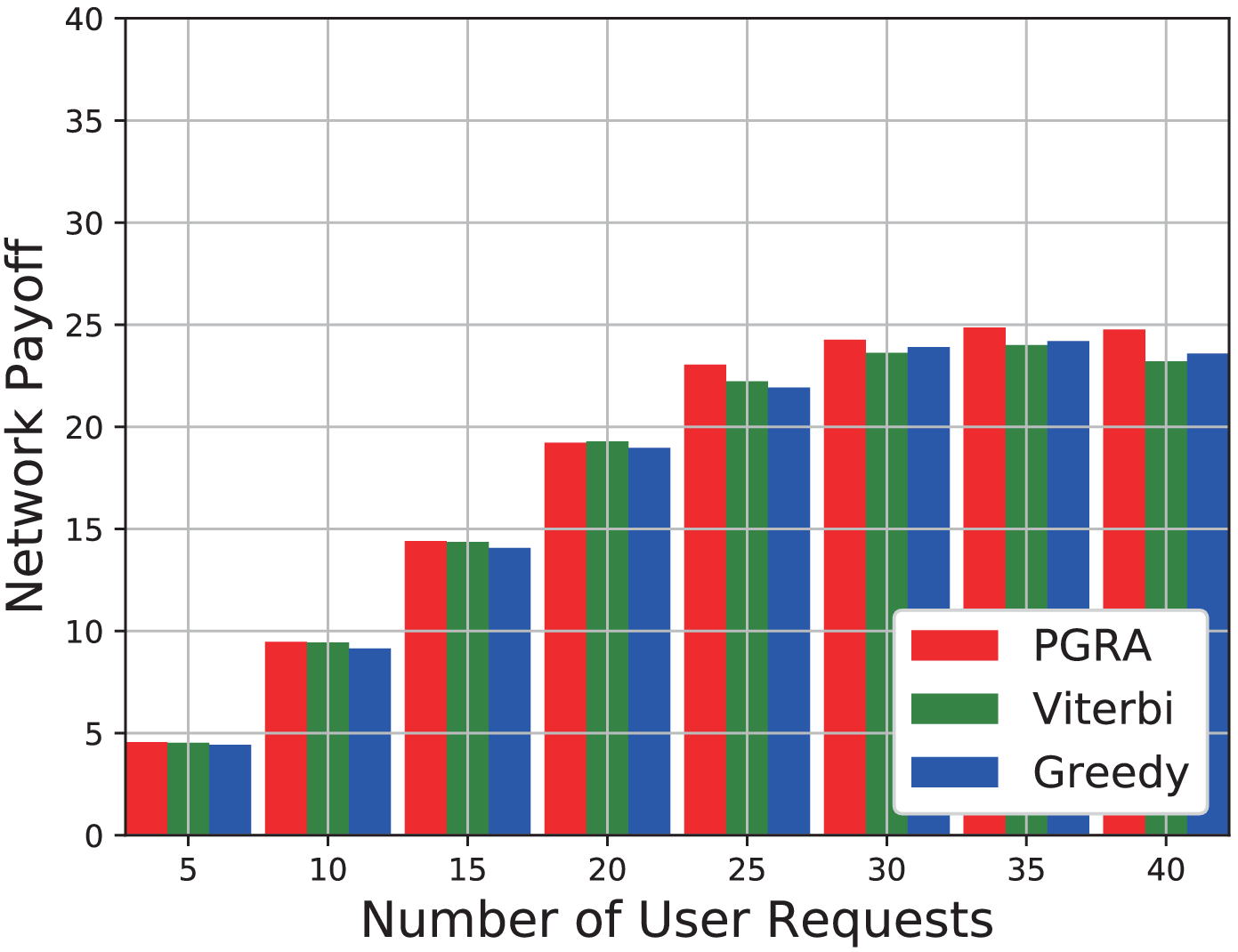}
  \label{Satellite network with N=9}}
  \subfigure[Satellite network with $N=12$]{\includegraphics[width=0.3\textwidth]{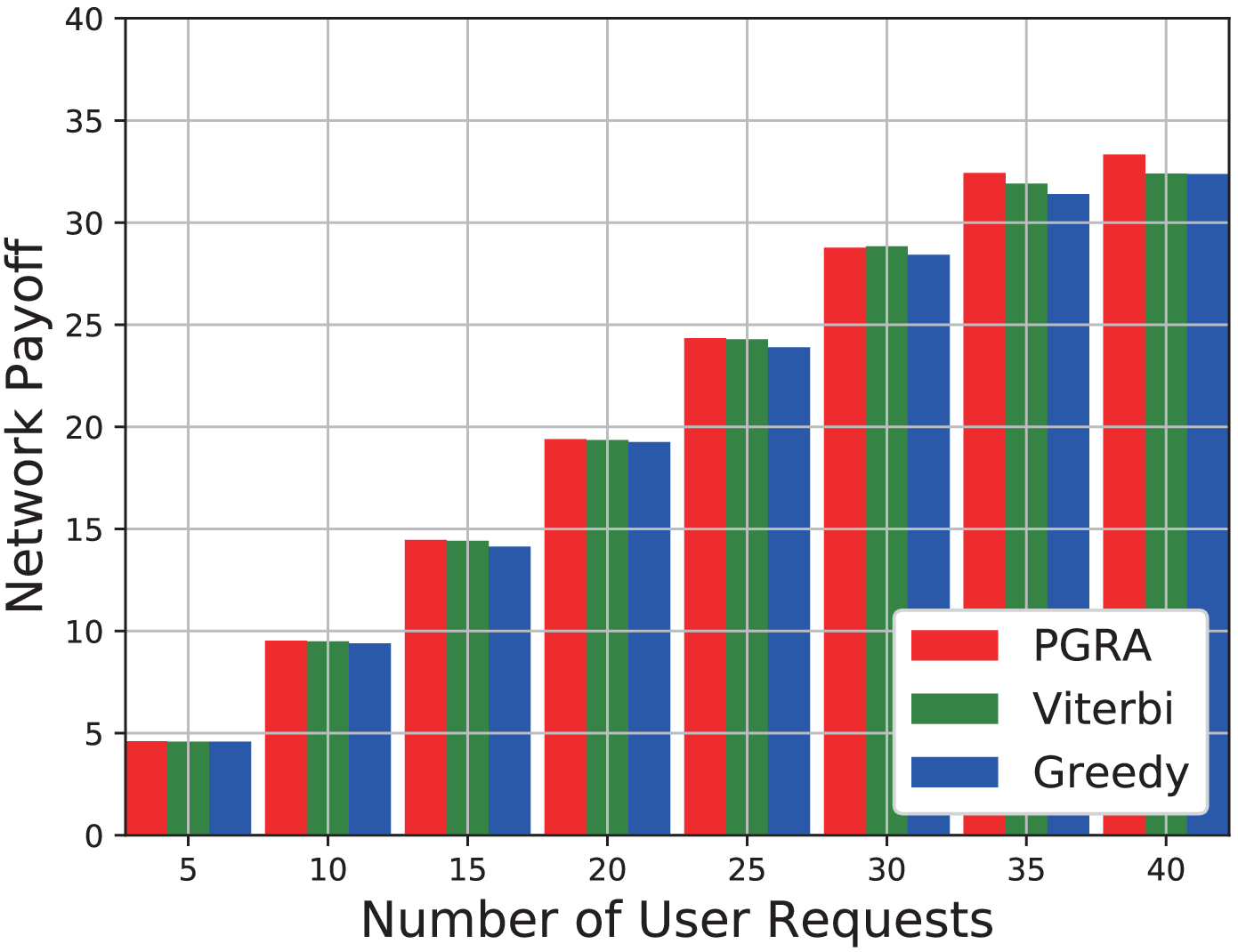}
  \label{Satellite network with N=12}}
  \subfigure[Satellite network with $N=15$]{\includegraphics[width=0.3\textwidth]{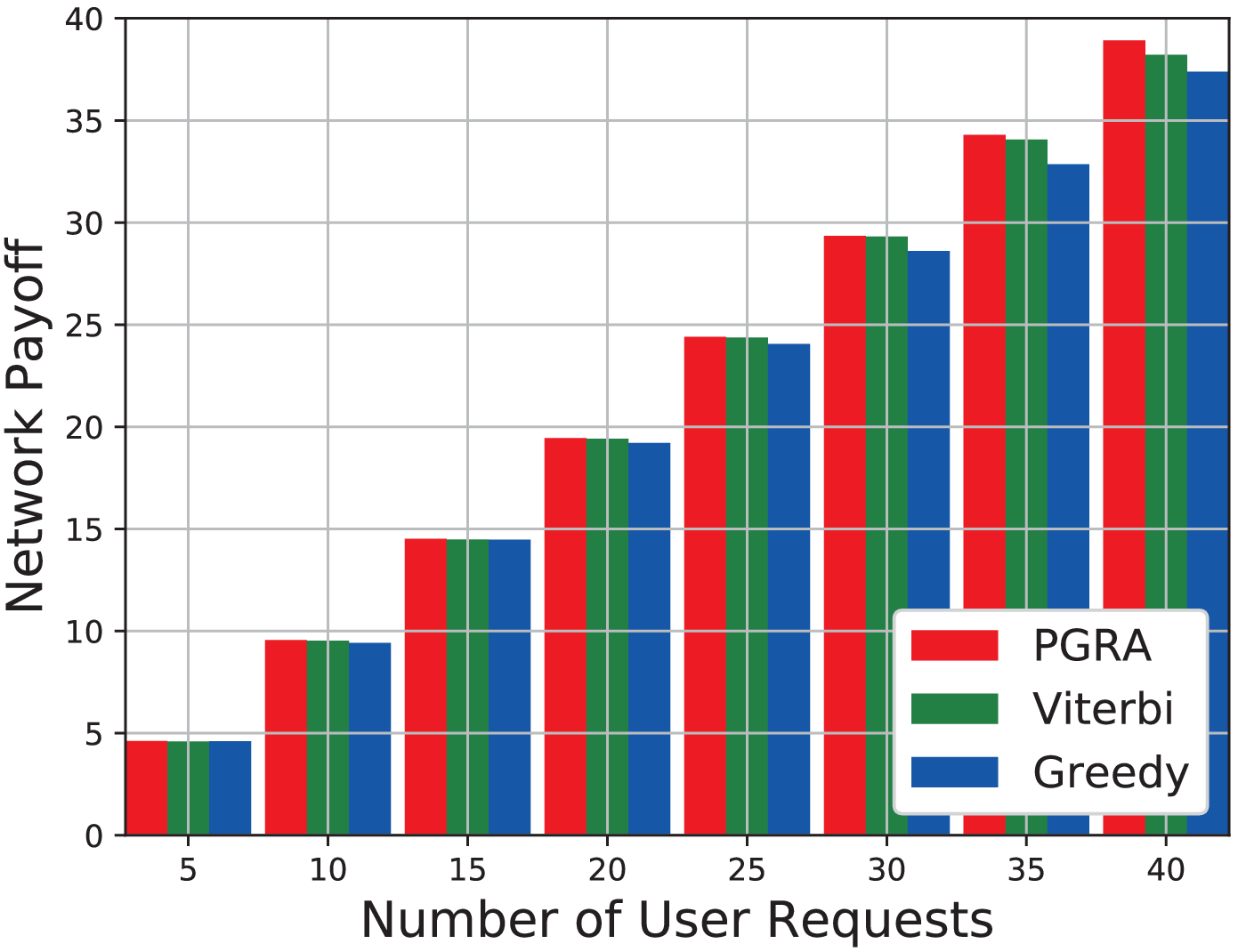}
  \label{Satellite network with N=15}}
  \caption{Network payoffs for PGRA, Viterbi, and Greedy in satellite networks with $N=9$, 12, and 15.}
  \label{Network payoffs for PGRA, Viterbi, and Greedy in satellite networks}
\end{figure*}

Fig.~\ref{Average network payoff} describes the average network payoffs for different number of user requests. It can be found from Fig.~\ref{Average network payoff} that the proposed PGRA algorithm outperforms the two baseline algorithms of Viterbi and Greedy in all the experiments. For an example of $M=15$, the network payoffs for PGRA, Viterbi, and Greedy are 13.7703, 13.0588, and 12.6343, respectively, and the result obtained by the proposed PGRA algorithm is over $5.44\%$ for Viterbi and $8.99\%$ for Greedy. Overall, the average performance improvement of the proposed PGRA algorithm is $5.16\%$ for Viterbi and $6.15\%$ for Greedy, respectively. In Fig.~\ref{Average percentage of allocated users}, we illustrate the average percentages of allocated user requests. We can find from Fig.~\ref{Average percentage of allocated users} that all user requests can be deployed to satellite nodes when the number of user requests is small, e.g., $M=10$. As $M$ increases, the percentage of allocated user requests begins to decrease due to the resource limitation of a satellite network. In these cases, the proposed PGRA algorithm also outperforms Viterbi and Greedy in terms of the percentage of allocated user requests. On average, the performance of the proposed PGRA algorithm is better $3.18\%$ than Viterbi and $4.60\%$ than Greedy. From Fig.~\ref{Performance comparison with PGRA, Viterbi, and Greedy in a satellite network with 6 satellite nodes}, we can demonstrate the effectiveness of the proposed PGRA algorithm compared with two baseline algorithms of Viterbi and Greedy, meanwhile, it is shown that the proposed PGRA algorithm outperforms Viterbi and Greedy for deploying user requests to satellite nodes.

In order to further evaluate the performance of the proposed PGRA algorithm in different scale satellite networks, three satellite networks, which consist of $N=9$, 12, and 15 satellite nodes, are used for conducting the experiments with user requests $\left \{5,10,15,20,25,30,35,40\right \}$, respectively. Each experiment is run 10 times and we obtain the average results. Fig.~\ref{Network payoffs for PGRA, Viterbi, and Greedy in satellite networks} shows the network payoffs for different user requests in three satellite networks with $N=9$, 12, and 15, respectively. In Fig.~\ref{Satellite network with N=9}, we describe the network payoffs for PGRA, Viterbi, and Greedy in the satellite network with $N=9$. Overall, we can find that the proposed PGRA algorithm performs better than Viterbi and Greedy. When the number of user requests is small, the performance of the proposed PGRA algorithm is close to that of Viterbi and over that of Greedy. As the number of user requests increases, the network payoff achieved by the proposed PGRA algorithm is more than that of Viterbi and Greedy. For example, when $M=10$, the network payoffs for PGRA, Viterbi, and Greedy are 9.4731, 9.4428, and 9.1489, respectively. The performance improvement of the proposed PGRA algorithm is $0.32\%$ for Viterbi and $3.54\%$ for Greedy. When $M=25$, the network payoffs for PGRA, Viterbi, and Greedy are 23.0444, 22.2306, and 21.9208, respectively. The proposed PGRA algorithm performs better $3.66\%$ than Viterbi and $5.12\%$ than Greedy. We can also find that the network payoffs begin to be stable when the number of user requests is greater than $25$. That is due to the fact that more user requests can not be deployed to satellite nodes as the resource limitation of a satellite network. On average, the performance of the proposed PGRA algorithm improves by $2.78\%$ for Viterbi and $3.11\%$ for Greedy. Fig.~\ref{Satellite network with N=12} and Fig.~\ref{Satellite network with N=15} illustrate the network payoffs for PGRA, Viterbi, and Greedy in satellite networks with $N=12$ and 15, respectively. Similar to the results in Fig.~\ref{Satellite network with N=9}, the proposed PGRA algorithm outperforms Viterbi and Greedy in satellite networks with $N=12$ and 15. For $N=12$, the network payoff obtained by the proposed PGRA algorithm increases by $0.96\%$ for Viterbi and $2.08\%$ for Greedy. For $N=15$, the network payoff obtained by the proposed PGRA algorithm increases by $0.65\%$ for Viterbi and $2.64\%$ for Greedy. In addition, as the increasing number in satellite nodes, a satellite network can provide more available network resources for user requests. Thus, more user requests can be placed to satellite nodes and that can lead to an increase in network payoff. For an instance of $M=35$, the network payoffs obtained by the proposed PGRA algorithm are 24.8568, 32.4339, and 34.2962 for satellite networks with $N=9$, 12, and 15, respectively. The network payoff obtained by the proposed PGRA algorithm for the satellite network with $N=15$ is over $37.97\%$ for the satellite network with $N=9$ and $5.74\%$ for the satellite network with $N=12$.
\begin{figure*}[tbp]
  \centering
  % Requires \usepackage{graphicx}
  \subfigure[Satellite network with $N=9$]{\includegraphics[width=0.3\textwidth]{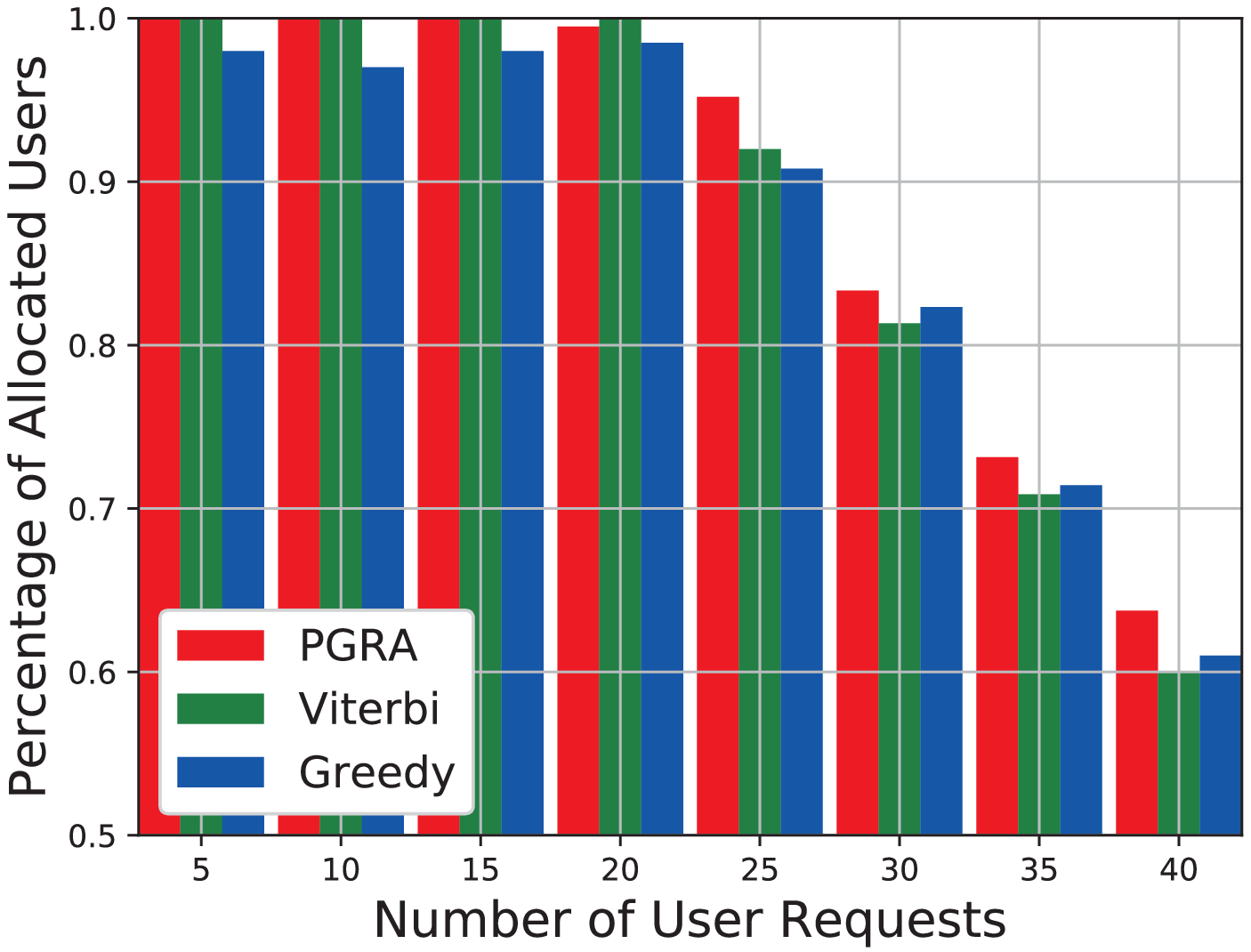}
  \label{Percentages of allocated user requests in a satellite network with N=9}}
  \subfigure[Satellite network with $N=12$]{\includegraphics[width=0.3\textwidth]{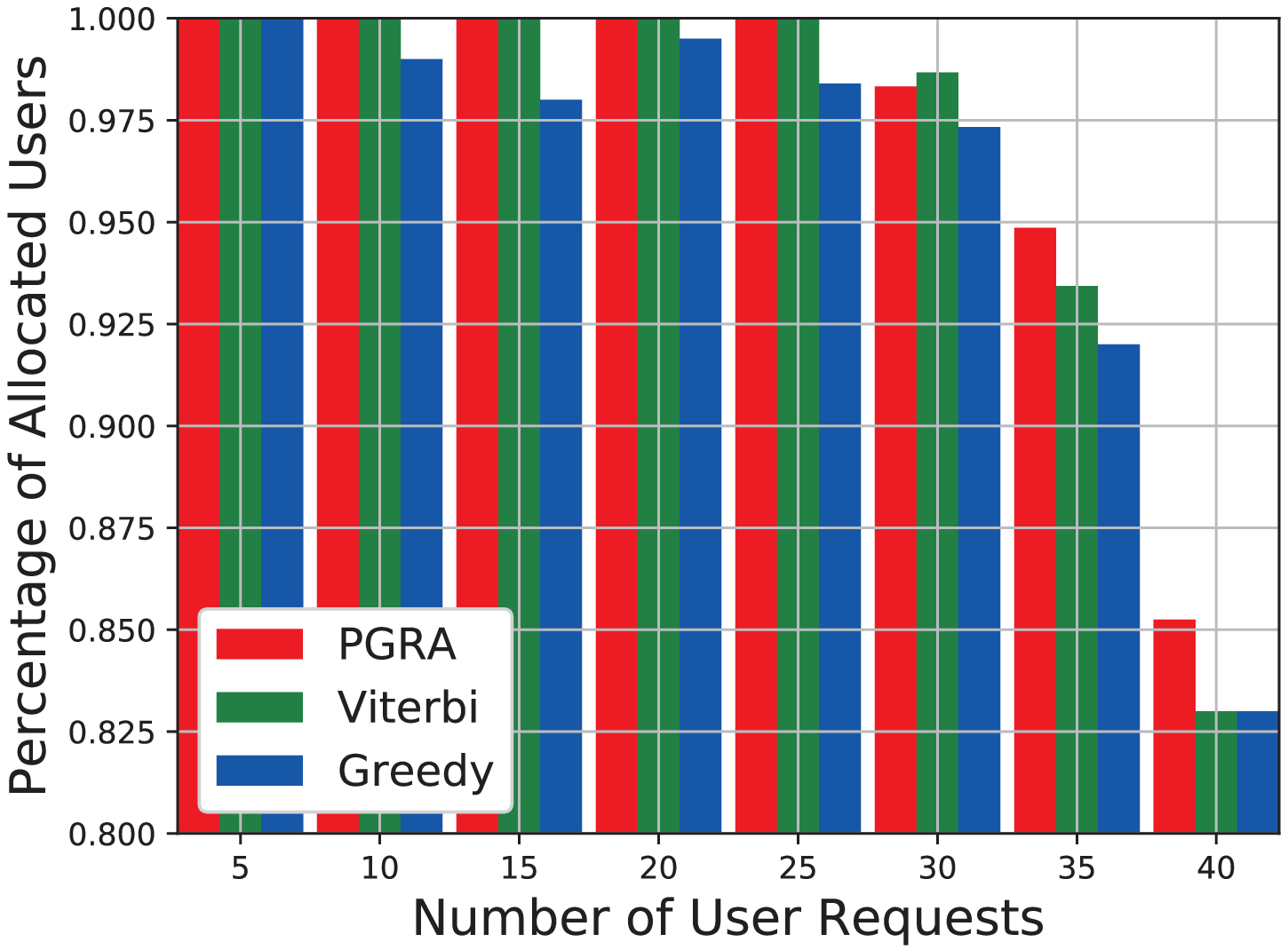}
  \label{Percentages of allocated user requests in a satellite network with N=12}}
  \subfigure[Satellite network with $N=15$]{\includegraphics[width=0.3\textwidth]{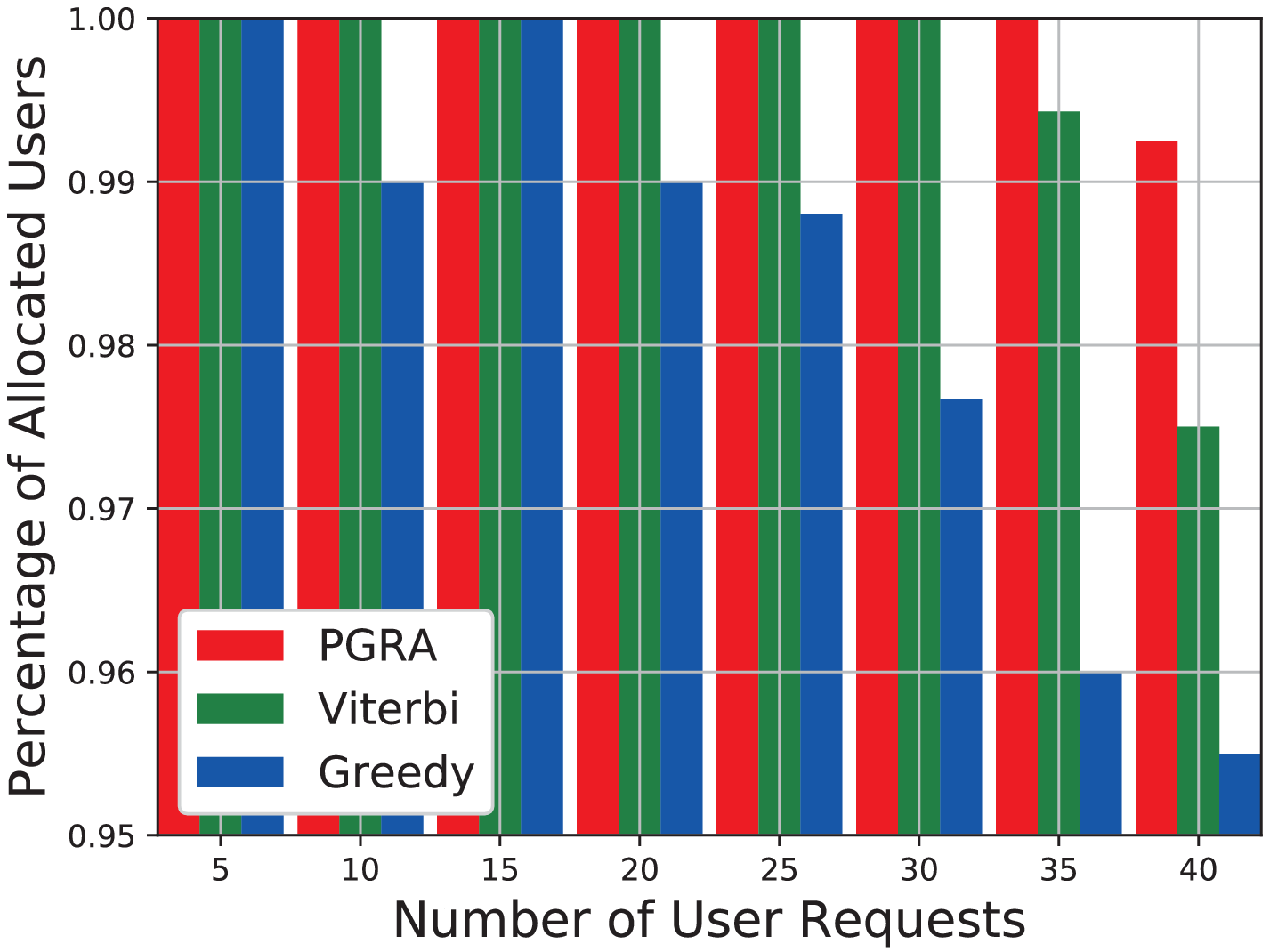}
  \label{Percentages of allocated user requests in a satellite network with N=15}}
  \caption{Percentages of allocated user requests for PGRA, Viterbi, and Greedy in satellite networks with $N=9$, 12, and 15.}
  \label{Percentages of allocated user requests for PGRA, Viterbi, and Greedy in satellite networks}
\end{figure*}

We also provide the percentages of allocated user requests for PGRA, Viterbi, and Greedy in three satellite networks with $N=9$, 12, and 15, as shown in Fig.~\ref{Percentages of allocated user requests for PGRA, Viterbi, and Greedy in satellite networks}. Fig.~\ref{Percentages of allocated user requests in a satellite network with N=9} describes the percentages of allocated user requests for different number of user requests in the satellite network with $N=9$. We can find from Fig.~\ref{Percentages of allocated user requests in a satellite network with N=9} that the percentages of allocated user requests are close to 1 for small number of user requests, e.g., $M=5$, 10, and 15. For small $M$, the resource requirements of user requests are less than the network resource capacities and all of them can be deployed to satellite nodes. As the increasing number in user requests, the available network resources are less and less. When the resource requirements of user requests exceed the network resource capacities, the satellite network can not provide any available resources for new user requests and that can result in a decrease in the percentages of allocated user requests, as shown in Fig.~\ref{Percentages of allocated user requests in a satellite network with N=9}. However, for almost all experiments, the proposed PGRA algorithm performs better than Viterbi and Greedy. On average, the percentage of allocated user requests obtained by the proposed PGRA algorithm improves by $1.52\%$ for Viterbi and $2.56\%$ for Greedy. Fig.~\ref{Percentages of allocated user requests in a satellite network with N=12} and Fig.~\ref{Percentages of allocated user requests in a satellite network with N=15} illustrate the percentages of allocated user requests for different user requests in satellite networks with $N=12$ and 15, respectively, where similar results in Fig.~\ref{Percentages of allocated user requests in a satellite network with N=9} can be obtained. When the number of user requests is small, the satellite network can provide enough available resources for deploying user requests. All of user requests can be deployed to satellite nodes and the percentages of allocated user requests are close to $1$. As $M$ increases, the available resources of satellite nodes are gradually decreasing and the resource requirements of more user requests will be not satisfied. Therefore, the percentage of allocated user requests will reduce as $M$ increases. We can also find from Fig.~\ref{Percentages of allocated user requests in a satellite network with N=12} and Fig.~\ref{Percentages of allocated user requests in a satellite network with N=15} that the performance of the proposed PGRA algorithm is better than that of Viterbi and Greedy. For $N=12$, the performance improvement of the proposed PGRA algorithm is $0.43\%$ for Viterbi and $1.46\%$ for Greedy. For $N=15$, the proposed PGRA algorithm performs better $0.28\%$ than Viterbi and $1.69\%$ than Greedy. Besides, as $N$ increases, more resources of the satellite network are available for deploying user requests and thus the percentages of allocated use requests will increase. For $M=35$, the percentages of allocated user requests obtained by the proposed PGRA algorithm are 0.7314, 0.9485, and 1.0 for satellite networks with $N=9$, 12, and 15, respectively.\par

\subsection{Performance Analysis in On-line Strategy}
\begin{figure*}[tbp]
  \centering
  % Requires \usepackage{graphicx}
  %\subfigure[Mean for 8 tasks]{\includegraphics[width=0.24\textwidth]{Fig/8-a.eps}
  \subfigure[Average energy cost]{\includegraphics[width=0.3\textwidth]{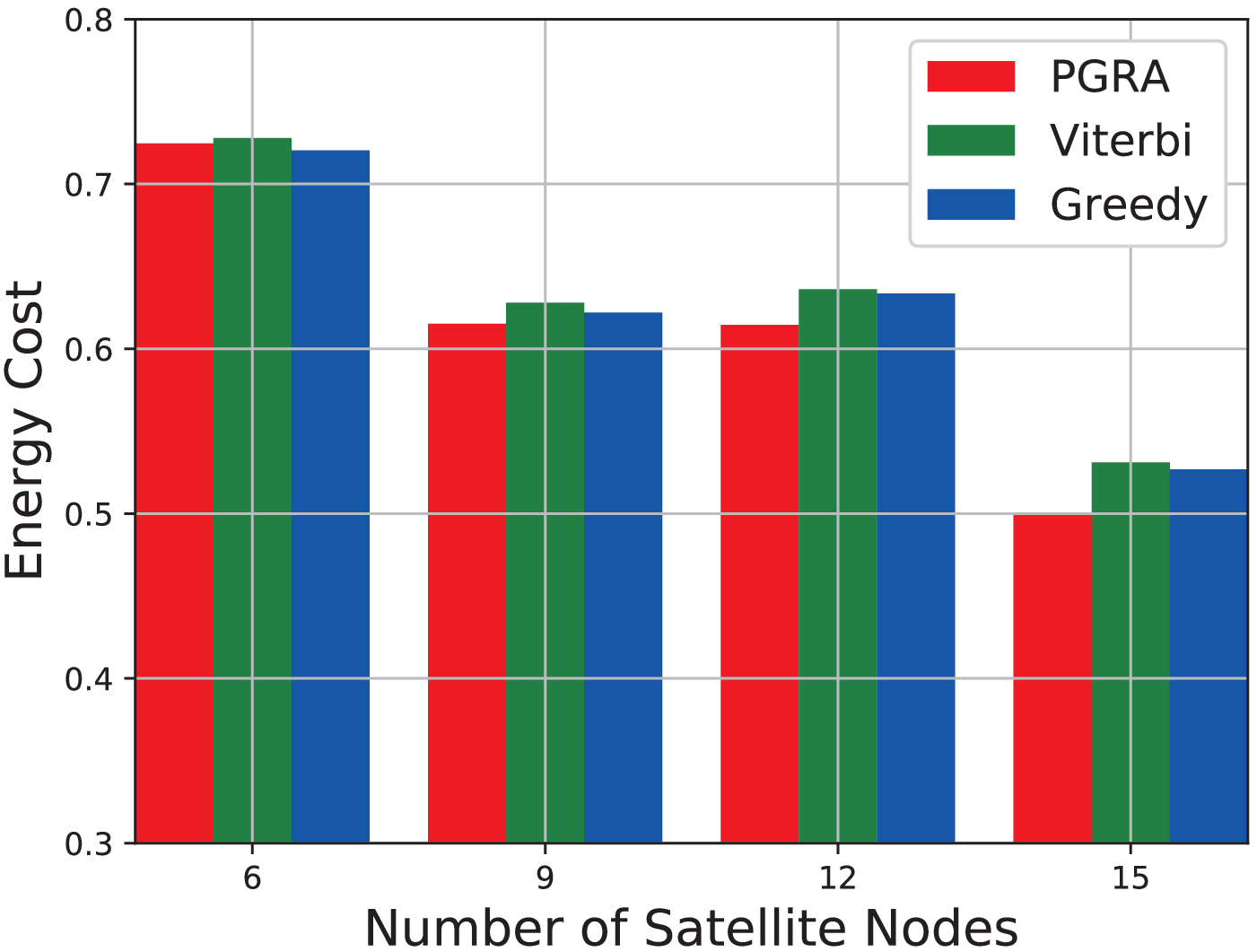}
  \label{Average energy cost in on-line strategy}}
  \subfigure[Average bandwidth cost]{\includegraphics[width=0.3\textwidth]{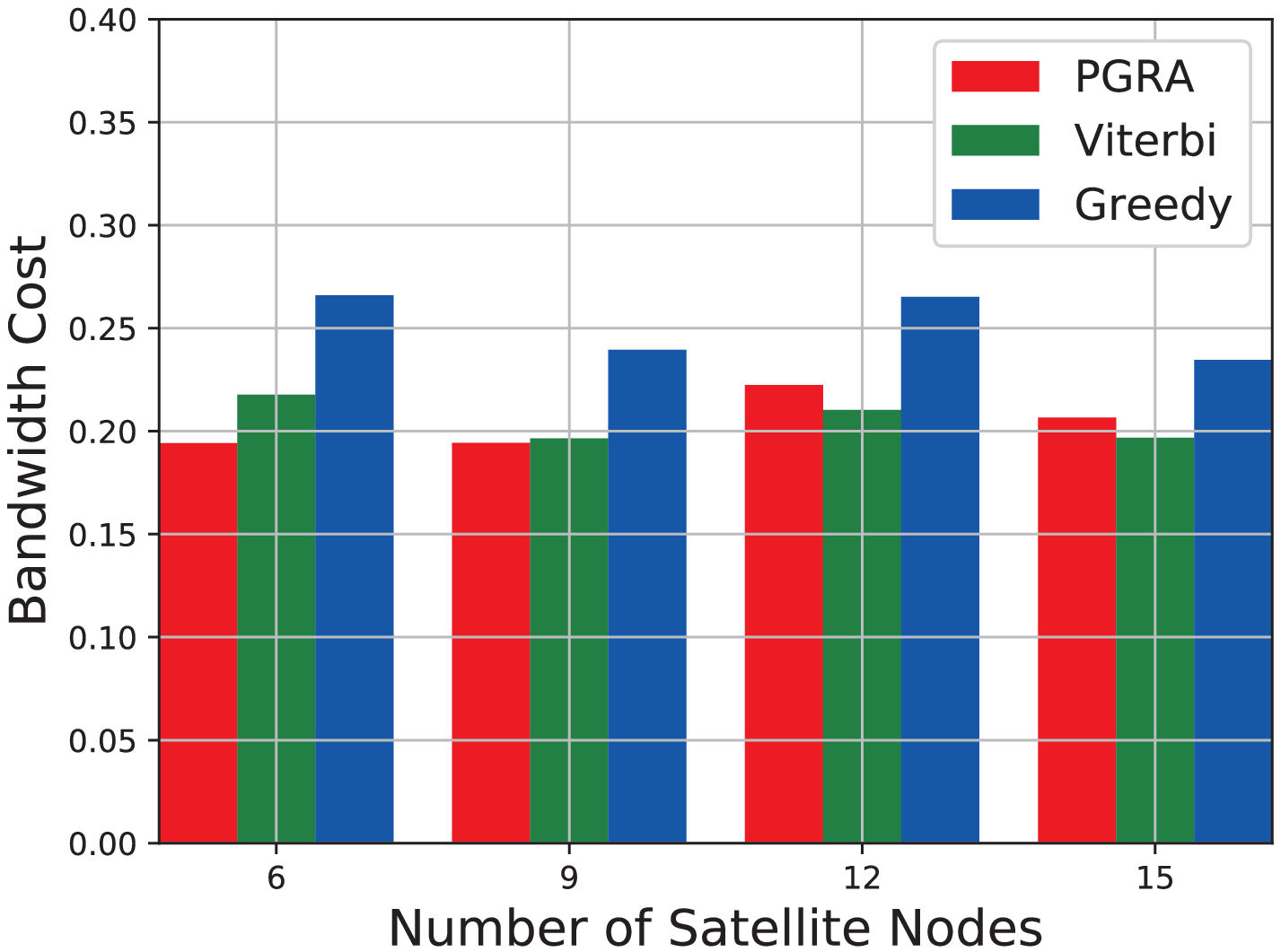}
  \label{Average bandwidth cost in on-line strategy}}
  \subfigure[Average service delay cost]{\includegraphics[width=0.3\textwidth]{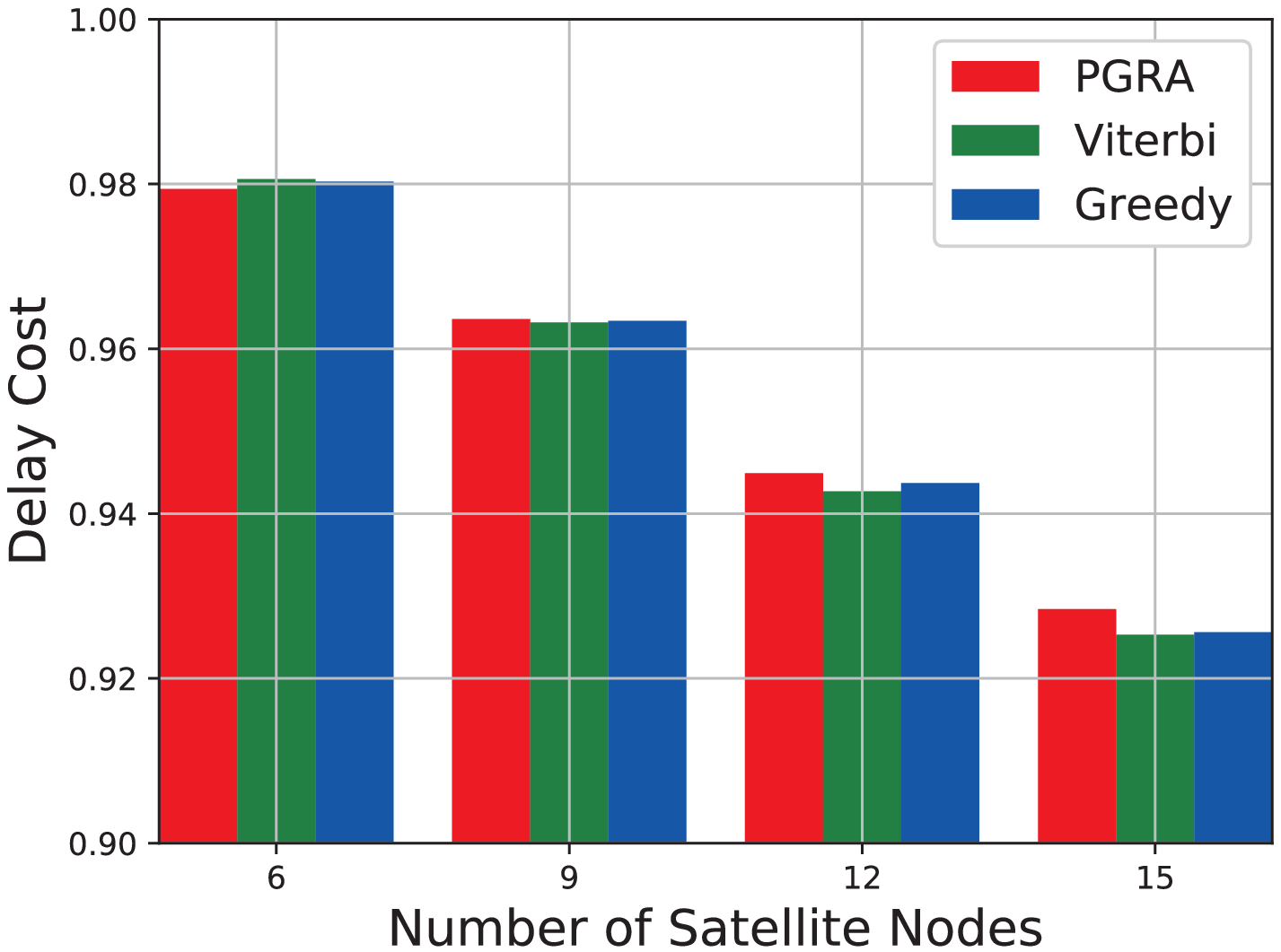}
  \label{Average service delay cost in on-line strategy}}
  \subfigure[Average network payoff]{\includegraphics[width=0.3\textwidth]{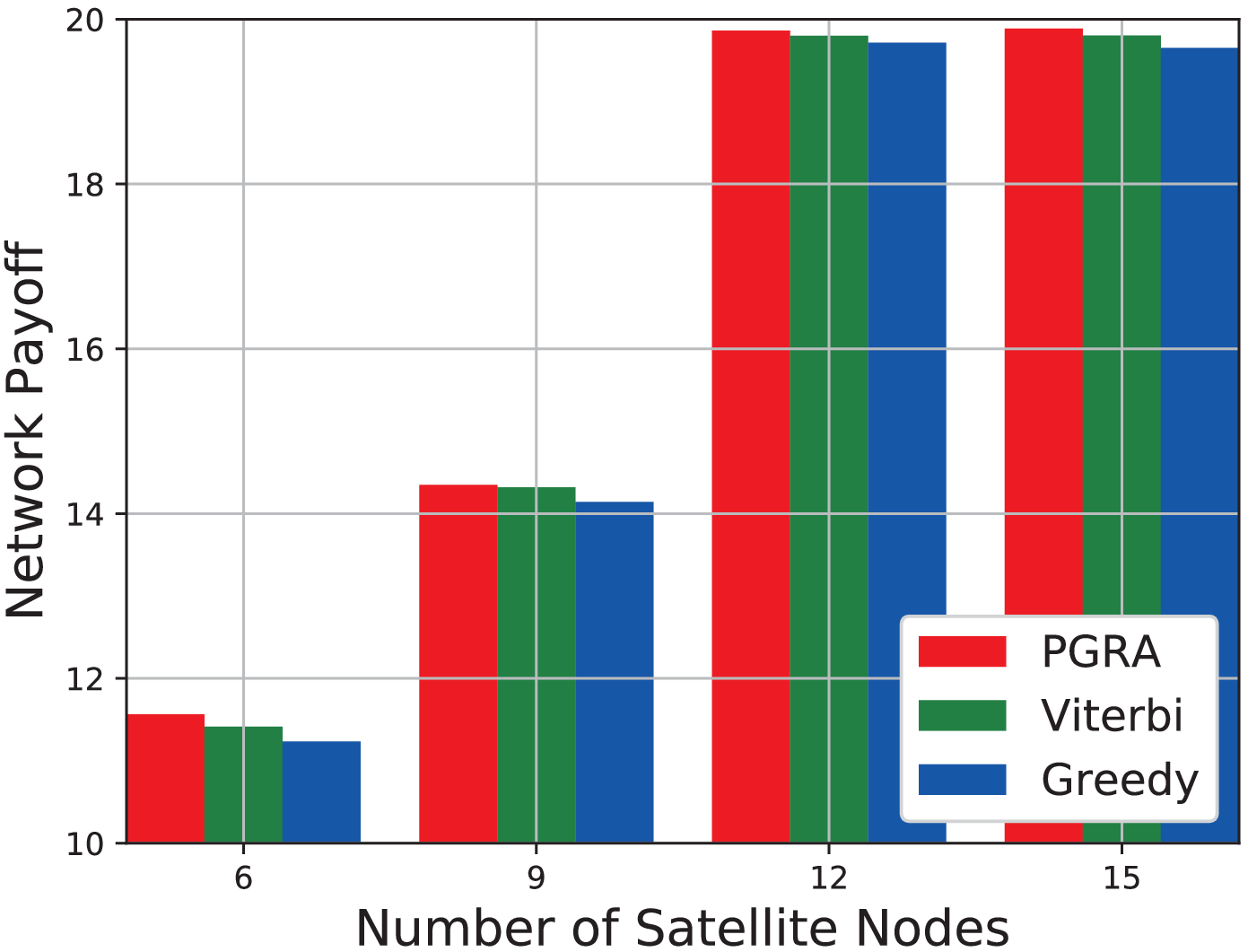}
  \label{Average network payoff in on-line strategy}}
  \subfigure[Average percentage of allocated users]{\includegraphics[width=0.3\textwidth]{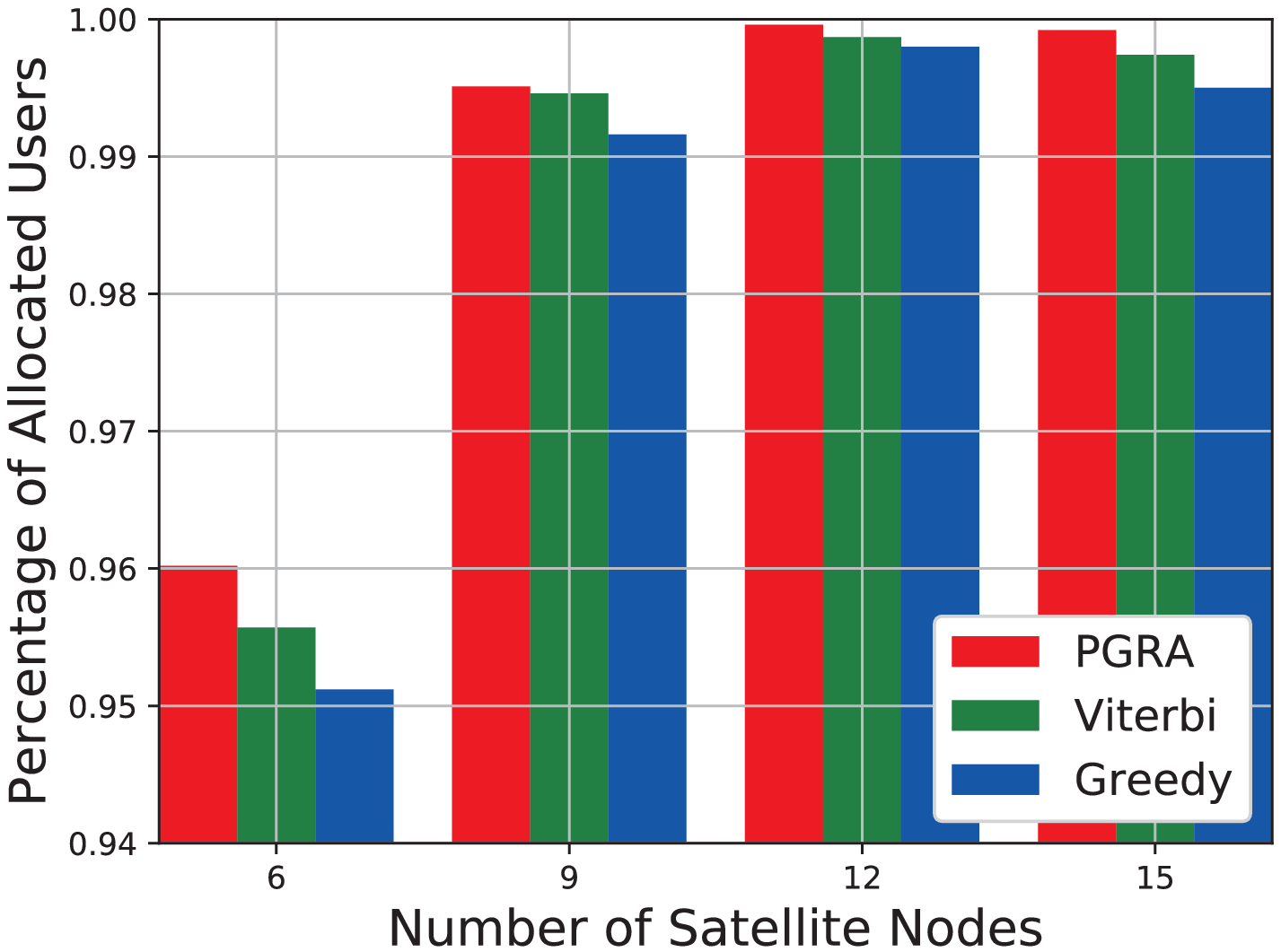}
  \label{Average percentage of allocated users in on-line strategy}}
  \caption{Performance comparison with PGRA, Viterbi, and Greedy in satellite networks with 6, 9, 12, and 15 satellite nodes, respectively.}
  \label{Performance comparison with PGRA, Viterbi, and Greedy in satellite networks with different satellite nodes}
\end{figure*}
To evaluate the effectiveness of the proposed PGRA algorithm for addressing the VNF placement problem in dynamic environment, we make the following experiments in satellite networks with 6, 9, 12, and 15 satellite nodes, respectively. The total number of time slots for each instance is 50 and we randomly generate the number of user requests from 5 to 10 in each time slot. The running periods for user requests can be randomly selected from 1 to 4 time slots. When the running time for a user request is over, the resources used by the user request can free and be available for deploying new user requests in the next time slot. All experiments are performed for 10 times and we obtain the average results.

In the four satellite networks, the average experiment results for deploying user requests in on-line strategy are shown in Fig.~\ref{Performance comparison with PGRA, Viterbi, and Greedy in satellite networks with different satellite nodes}. Fig.~\ref{Average energy cost in on-line strategy} describes the average energy costs for different satellite networks in on-line strategy. We can find from Fig.~\ref{Average energy cost in on-line strategy} that overall the energy costs obtained by the proposed PGRA algorithm for satellite networks with $N=6$, 9, 12, and 15 are less than that of Viterbi and Greedy. For an example of $N=12$, the energy costs for PGRA, Viterbi, and Greedy are 0.6145, 0.6360, and 0.6335, respectively, the energy cost of the proposed PGRA algorithm decreases by $3.38\%$ for Viterbi and $2.99\%$ for Greedy. The average bandwidth costs for different satellite networks in on-line strategy are illustrated in Fig.~\ref{Average bandwidth cost in on-line strategy}. We can find that the bandwidth costs obtained by the proposed PGRA algorithm for $N=6$ and 9 are less than that of Viterbi and Greedy, however, for $N=12$ and 15, are more than that of Viterbi and less than that of Greedy. That is due to the fact that the number of satellite nodes used by user requests can have an impact on the strategy of deploying VNFs. When the number of used satellite nodes is small, adjacent VNFs for a user request may not be deployed to the same satellite node due to the available resource limitation. If adjacent VNFs are deployed two different satellite nodes there will lead to the bandwidth costs as a result of routing traffic flows. In Fig.~\ref{Average service delay cost in on-line strategy}, we describe the service delay costs for different satellite networks. Similar to the results in Fig.~\ref{Average bandwidth cost in on-line strategy}. For $N=12$ and 15, the proposed PGRA algorithm performs better than Viterbi and Greedy in terms of energy consumption, however, the delay costs obtained by the proposed PGRA algorithm are more than that of Viterbi and Greedy, as the result of the tradeoff between bandwidth usage and energy consumption. In addition, we can find from Fig.~\ref{Average service delay cost in on-line strategy} that the average delay cost can decrease as the number of satellite nodes increases. The delay cost obtained by the proposed PGRA algorithm is 0.9793 for $N=6$ and 0.9449 for $N=12$. As the increase in the number of satellite nodes, there are more available resources for allocating user requests. Therefore, the paths with lower delay costs can be chosen to route traffic flows for user requests.

Note that the proposed PGRA outperforms both Viterbi and Greedy for network payoffs and percentages of allocated users, where the average network payoffs and percentages of allocated user requests are shown in Fig.~\ref{Average network payoff in on-line strategy} and Fig.~\ref{Average percentage of allocated users in on-line strategy}, respectively. From Fig.~\ref{Average network payoff in on-line strategy}, we can find that the network payoffs obtained by the proposed PGRA algorithm for satellite networks with $N=6$, 9, 12, and 15 are better than that of Viterbi and Greedy. For these four satellite networks, the network payoff improvement of the proposed PGRA algorithm is $1.31\%$, $0.21\%$, $0.31\%$, and $0.42\%$ for Viterbi, and $2.92\%$, $1.45\%$, $0.73\%$, and $1.18\%$ for Greedy, respectively. Furthermore, we can also find that the network payoff can increase as the number of satellite nodes increases. That is the fact that as available resources in satellite networks increase more user requests can be deployed to satellite nodes. For example, the network payoffs obtained by the proposed PGRA algorithm, in satellite networks with $N=6$, 9, 12, and 15, are 11.5633, 14.3481, 19.8609, and 19.8865, respectively. The average percentages of allocated user requests are shown in Fig.~\ref{Average percentage of allocated users in on-line strategy}. The proposed PGRA algorithm performs better than Viterbi and Greedy. For satellite networks with $N=6$, 9, 12, and 15, the percentages of allocated user requests obtained by the proposed PGRA algorithm increase by $0.47\%$, $0.04\%$, $0.09\%$, and $0.18\%$ for Viterbi, and $0.93\%$, $0.34\%$, $0.16\%$, and $0.42\%$ for Greedy, respectively. As the number of satellite nodes increases, we can also find that the percentage of allocated user requests increases. For the proposed PGRA algorithm, the percentages of allocated user requests in satellite networks with $N=6$, 9, 12, and 15 are $96.01\%$, $99.50\%$, $99.96\%$, and $99.92\%$, respectively. From Fig.~\ref{Performance comparison with PGRA, Viterbi, and Greedy in satellite networks with different satellite nodes}, we can demonstrate the effectiveness of the proposed PGRA algorithm compared with Viterbi and Greedy in dynamic environment.

\section{Conclusion}\label{Conclusion}
In this paper, we study the VNF placement problem by a potential game in satellite edge computing. Our aim is to maximize the number of allocated user requests with minimum overall deployment cost, which include energy, bandwidth, and service delay costs. We prove the VNF placement problem to be NP-hard and formulate the problem as a potential game with maximum network payoff. Each user request can make the deployment decision in a self-interested way and all user requests can optimize their strategies by competing with each other in a distributed manner. Considering that a potential game admits at least a Nash equilibrium we implement a decentralized resource allocation algorithm based on a potential game to find an approximate solution.

To evaluate the effectiveness of the proposed PGRA algorithm, we first discuss the influence of system parameters on the proposed PGRA algorithm performance by the Taguchi method. Then we make the experiments for different number of user requests in satellite networks with 6, 9, 12, and 15 satellite nodes and compare the simulation results with two baseline algorithms of Viterbi and Greedy. For example, in the case of $N=12$, the average network payoff obtained by the proposed PGRA algorithm increases by $0.96\%$ for Viterbi and $2.08\%$ for Greedy, the average percentage of allocated users obtained by the proposed PGAR algorithm improves by $0.43\%$ for Viterbi and $1.46\%$ for Greedy. In addition, we investigate the performance of the proposed PGRA algorithm in dynamic environment. In dynamic environment, for $N=12$, the proposed PGRA algorithm improves by $0.31\%$ for Viterbi and $0.73\%$ for Greedy in terms of network payoffs, the percentage of allocated users obtained by the proposed PGAR algorithm increases by $0.09\%$ for Viterbi and $0.16\%$ for Greedy. All the simulation results show that the proposed PGRA algorithm is an effective approach for addressing the VNF placement in satellite edge computing and performs better than Viterbi and Greedy.

% if have a single appendix:
%\appendix[Proof of the Zonklar Equations]
% or
%\appendix  % for no appendix heading
% do not use \section anymore after \appendix, only \section*
% is possibly needed

% use appendices with more than one appendix
% then use \section to start each appendix
% you must declare a \section before using any
% \subsection or using \label (\appendices by itself
% starts a section numbered zero.)
%

%\appendices
%\section{Proof of the First Zonklar Equation}
%Appendix one text goes here.
%
%% you can choose not to have a title for an appendix
%% if you want by leaving the argument blank
%\section{}
%Appendix two text goes here.

% use section* for acknowledgment
%\section*{Acknowledgment}
%
%
%The authors would like to thank...

% Can use something like this to put references on a page
% by themselves when using endfloat and the captionsoff option.
\ifCLASSOPTIONcaptionsoff
  \newpage
\fi

% trigger a \newpage just before the given reference
% number - used to balance the columns on the last page
% adjust value as needed - may need to be readjusted if
% the document is modified later
%\IEEEtriggeratref{8}
% The "triggered" command can be changed if desired:
%\IEEEtriggercmd{\enlargethispage{-5in}}

% references section

% can use a bibliography generated by BibTeX as a .bbl file
% BibTeX documentation can be easily obtained at:
% http://mirror.ctan.org/biblio/bibtex/contrib/doc/
% The IEEEtran BibTeX style support page is at:
% http://www.michaelshell.org/tex/ieeetran/bibtex/
\bibliographystyle{IEEEtran}
% argument is your BibTeX string definitions and bibliography database(s)
\bibliography{Virtual_Network_Function_Placement_in_Satellite_Edge_Computing_with_a_Potential_Game_Approach}
%
% <OR> manually copy in the resultant .bbl file
% set second argument of \begin to the number of references
% (used to reserve space for the reference number labels box)
%\begin{thebibliography}{1}
%
%\bibitem{IEEEhowto:kopka}
%H.~Kopka and P.~W. Daly, \emph{A Guide to \LaTeX}, 3rd~ed.\hskip 1em plus
%  0.5em minus 0.4em\relax Harlow, England: Addison-Wesley, 1999.
%
%\end{thebibliography}

% biography section
%
% If you have an EPS/PDF photo (graphicx package needed) extra braces are
% needed around the contents of the optional argument to biography to prevent
% the LaTeX parser from getting confused when it sees the complicated
% \includegraphics command within an optional argument. (You could create
% your own custom macro containing the \includegraphics command to make things
% simpler here.)
%\begin{IEEEbiography}[{\includegraphics[width=1in,height=1.25in,clip,keepaspectratio]{mshell}}]{Michael Shell}
% or if you just want to reserve a space for a photo:

% if you will not have a photo at all:

% insert where needed to balance the two columns on the last page with
% biographies
%\newpage

\begin{IEEEbiography}[{\includegraphics[width=1in,height=1.25in,clip,keepaspectratio]{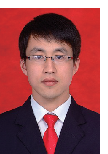}}]{Xiangqiang Gao} received the B.Sc. degree in school of electronic engineering from Xidian University and the M.Sc. degree from Xi\textquoteright an Microelectrinics Technology Institute, Xi\textquoteright an, China, in 2012 and 2015, respectively. He is currently pursuing the Ph.D. degree with the School of Electronic and Information Engineering, Beihang University, Beijing, China. His research interests include rateless codes, software defined network and network function virtualization.\par
\end{IEEEbiography}

\begin{IEEEbiography}[{\includegraphics[width=1in,height=1.25in,clip,keepaspectratio]{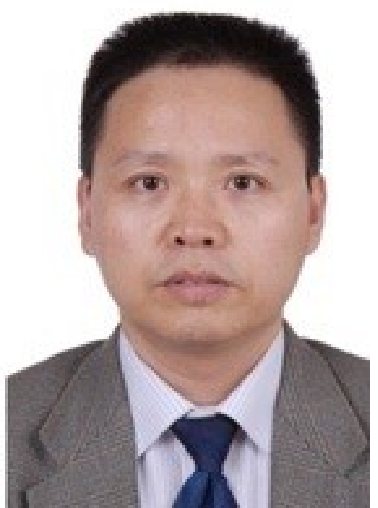}}]{Rongke Liu} received the B.Sc. degree in electronic engineering and Ph.D. degree in information and communication engineering from Beihang University, Beijing, China, in 1996 and 2002, respectively. From 2006 to 2007, he was a visiting professor at Florida Institute of Technology, Florida. In August, 2015, he visited the university of Tokyo as a senior visiting scholar. He is a Full Professor with the School of Electronic and Information Engineering in Beihang University, specializing in the fields of information and communication engineering. He has authored or co-authored more than 100 papers in journals and conferences, and edited four books. His current research interests include multimedia computing and space information network. He is a Member of the IEEE and ACM. Dr. Liu was one of the winners of education ministry\textquoteright s New Century Excellent Talents supporting plan in 2012.\par
\end{IEEEbiography}
%\vfill
%\newpage
\begin{IEEEbiography}[{\includegraphics[width=1in,height=1.25in,clip,keepaspectratio]{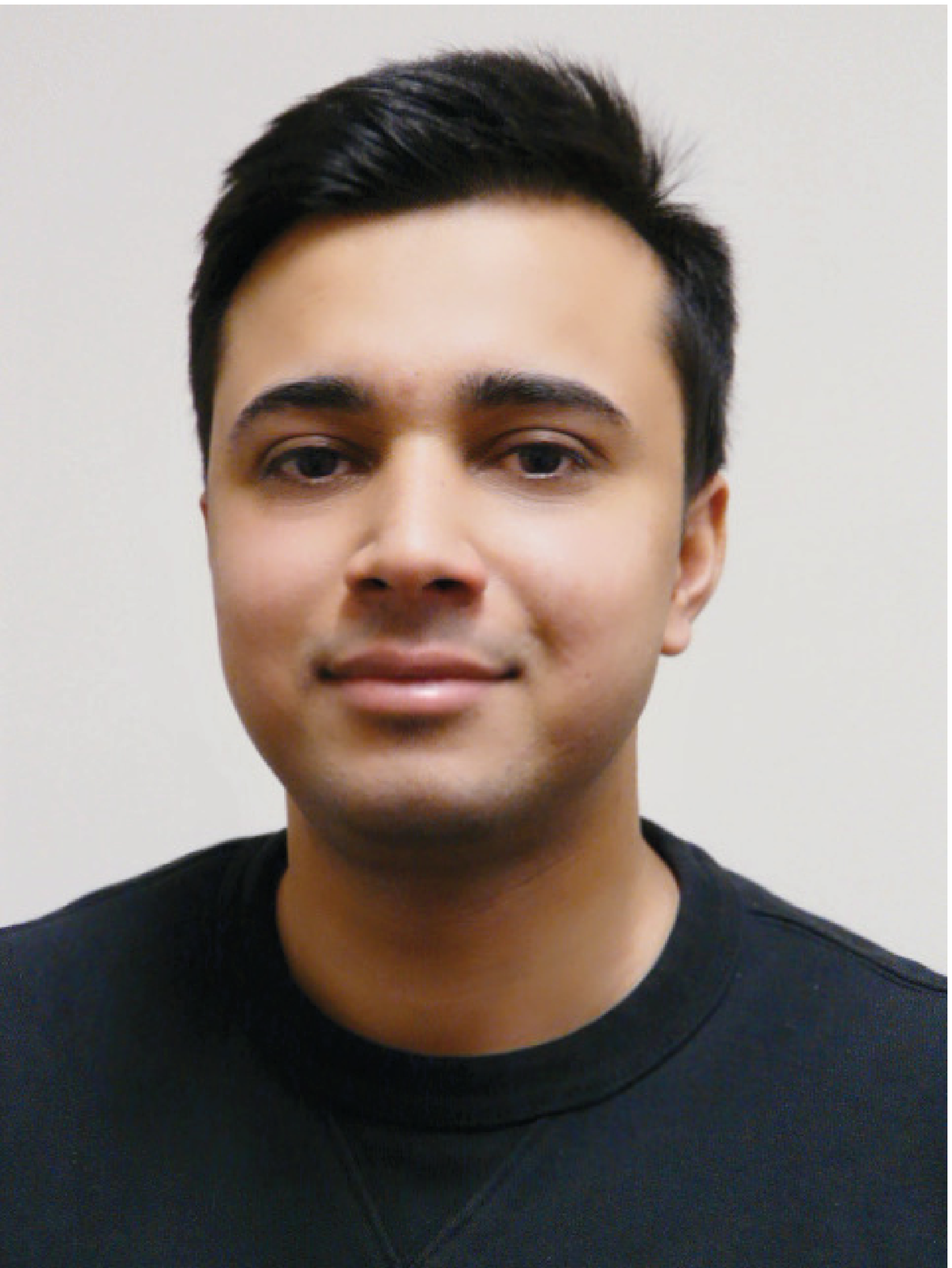}}]{Aryan Kaushik} is currently a Research Fellow at the Department of Electronic and Electrical Engineering, University College London (UCL), United Kingdom, from Feb. 2020. He received PhD in Communications Engineering at the Institute for Digital Communications, School of Engineering, The University of Edinburgh, United Kingdom, in Jan. 2020. He received MSc in Telecommunications from The Hong Kong University of Science and Technology, Hong Kong, in 2015. He has held visiting research appointments at the Wireless Communications and Signal Processing Lab, Imperial College London, UK, from 2019-20, the Interdisciplinary Centre for Security, Reliability and Trust, University of Luxembourg, Luxembourg, in 2018, and the School of Electronic and Information Engineering, Beihang University, China, from 2017-19. His research interests are broadly in signal processing, radar, wireless communications, millimeter wave and multi-antenna communications.\par
\end{IEEEbiography}
% You can push biographies down or up by placing
% a \vfill before or after them. The appropriate
% use of \vfill depends on what kind of text is
% on the last page and whether or not the columns
% are being equalized.

\vfill

% Can be used to pull up biographies so that the bottom of the last one
% is flush with the other column.
%\enlargethispage{-5in}

% that's all folks
\end{document}